\documentclass[11pt]{article}

\usepackage{amsfonts,amsmath,amssymb}
\usepackage{url}
\usepackage{tikz}
\usetikzlibrary{arrows}
\usepackage{multirow}
\usepackage{xspace}
\usepackage{amsthm}
\usepackage{comment}
\usepackage[algo2e,vlined,linesnumbered,tworuled]{algorithm2e}
\usepackage[english]{babel}

\SetAlFnt{\footnotesize}
\SetAlCapFnt{\footnotesize}
\SetAlCapNameFnt{\footnotesize}
\SetVlineSkip{1pt}
\SetAlCapHSkip{0pt}

\newcommand{\MICM}{{\textsc{IMA}}\xspace}

\newcommand{\CICM}{{\textsc{CostIMA}}\xspace}
\newcommand{\Greedy}{{\textsc{Greedy}}\xspace}

\newcommand{\MSC}{{\textsc{MSC}}\xspace}

\newtheorem{theorem}{Theorem}
\newtheorem{lemma}[theorem]{Lemma}

\title{Recommending links through influence maximization}

\author{Gianlorenzo D'Angelo,$^{1}$ Lorenzo Severini,$^{2}$ Yllka Velaj$^{1,3}$\\
\normalsize{$^{1}$Gran Sasso Science Institute (GSSI),}\\
\normalsize{$^{2}$ISI Foundation,}\\
\normalsize{$^{3}$University of Chieti-Pescara.}\\
}
\date{}

\begin{document}
\maketitle
\begin{abstract} 
The link recommendation problem consists in suggesting a set of links to the users of a social network in order to increase their social circles and the connectivity of the network.
Link recommendation is extensively studied in the context of social networks and of general complex networks due to its wide range of applications. 
Most of the existing link recommendation methods estimate the likelihood that a link is adopted by users and recommend links that are likely to be established. 
However, most of such methods overlook the impact that the suggested links have on the capability of the network to spread information.  Indeed, such capability is directly correlated with both the engagement of a single user and the revenue of online social networks.

In this paper, we study link recommendation systems from the  point of view  of information diffusion. In detail, we consider the problem in which we are allowed to spend a given budget to create new links so to suggest a bounded number of possible persons to whom become friend in order to maximize the influence of a given set of nodes. We model the influence diffusion in a network with the popular Independent Cascade model.

\end{abstract}

\section{Introduction}

Link recommendation is one of the most important features of online social networking sites. Recommending suitable connections to social network users has a twofold impact: it improves the user's experience by enlarging users' social circles and connectivity, and it increases social network revenue by enhancing the user engagement and retention rate.

Most of the existing link recommendation methods estimate the likelihood that a link is adopted by users and recommend links that are likely to be established~\cite{BL11,LFS15,LK03,LZ11}.
In social networks such likelihood is estimated by considering the similarity of user profiles and some structural network properties.
Friendship social networks, like Facebook, exploit similarity metrics that are based on the number of common neighbors. For example, the Friend-of-Friend (FoF) algorithm~\cite{LK03} recommends the users that have the highest number of common friends with the receiver of the recommendation. Other examples of such similarity metrics are the Adamic-Adar~\cite{AdAd} index, the Jaccard's coefficient~\cite{jaccard}, and the preferential attachment index~\cite{BBCR03,N01}. In content-centric social networks such as Twitter and Google+ the link recommender systems  take also into account the similarity of the users' interests.

Most of the known methods aim at having a high accuracy in the prediction of the suggested links, without considering the impact of the new links on the capability of the network to spread information. This approach speeds up the network growth but is only able to infer links that will likely occur in the near future.
Another drawback of the existing methods is that, in most of the cases, they suggest links to a short range of users and this does not necessarily lead to a network growth and an improved user engagement~\cite{YWBWWC15}.

On the other hand, the capability of a social network to spread information is directly correlated with both the engagement of a single user and the revenue of an online social network~\cite{CRRB12}. From a user's perspective, being able to quickly and effectively disseminating information is highly desirable as it helps the user to share contents so to reach a large number of other users. This in turn helps the user to build its own social reputation, to express and diffuse its own opinion, and to discover novel contents and information. From the social network point of view, the effectiveness of the information spreading capabilities helps in improving the user engagement, which in turn increases the retention rate and the number of new subscriptions. Moreover, being able to quickly deliver diverse contents to a large portion of the users, increases the opportunities of making revenue from advertisement.

Most of the existing link recommendation systems overlook this aspect, which is crucial for both users and social networks. In this paper, we consider the problem of recommending links to a given set of users in a social network, without exceeding a given budget, in such a way that the number of users reached by the contents generated by such set of users is maximized. Our main objective is to improve the capability of the given users to diffuse their own contents, this in turn drives the network evolution towards an increment of the spreading capability of the whole network.

Several models of information diffusion have been introduced in the literature, two widely studied models are: the \emph{Linear Threshold Model} (LTM)~\cite{G78,KKT15,S06} and the \emph{Independent Cascade Model} (ICM)~\cite{GLM01, GLM01a, KKT03, KKT15}.
In both models, we can distinguish between \emph{active}, or \emph{infected}, nodes which spread the information and inactive ones. At the beginning of the process a small percentage of nodes of the graph is set to active in order to let the information diffusion process start. Such nodes are called \emph{seeds}. Recursively, currently infected nodes can infect their neighbours with some probability.  After a certain number of such cascading cycles, a large number of nodes might becomes infected in the network.
%
%
The process terminates when no further node gets activated.
In this paper we adopt  ICM to model the way in which the contents are propagated in the network.

\subsection{Related works}
The problem of recommending links to the users of a social network has been widely studied, we refer to~\cite{LFS15} and~\cite{LZ11}  for surveys on the link recommendation and link prediction problems, respectively.
The problem of recommending links by taking into account the information spreading capability, instead, has received little attention in the literature. In the following we focus on such problem and on the problems of modifying a graph in order to maximize or minimize the spread of information through a network under LTM and ICM models.

Yu et al.~\cite{YWBWWC15} propose a recommendation algorithm called ACR-FoF (algebraic connectivity regularized friends-of-friends) that uses the algebraic connectivity of a connected network to estimate its capability for spreading contents. The ACR-FoF takes also into account the success rate of the suggested links. The authors give experimental evidence that ACR-FoF  improves the spread of contents in a social networks but do not prove any approximation guarantees.
Chaoji et al.~\cite{CRRB12} consider a model in which each node is associated with an independent probability to share a content with all its neighbors and with a set of contents that are generated by the node itself. The problem they study consists in maximizing the expected number of nodes influenced by some of the contents by adding a set of edges under the constraint that each node has at most $k$ incident new edges. They show that the problem is $NP$-hard and propose an information diffusion model called Restricted Maximum Probability Path Model in which a content is propagated between two users along the path with maximum probability among those containing a recommended edge. They show that under this model the objective function is submodular and hence the problem can be approximated within a constant bound.
Li et al.~\cite{LXLSGS13} introduce the notion of user diffusion degree which is a measure of the influence that a user has and is computed by combining community detection algorithms with information diffusion models. They propose an algorithm that suggests links by combining the diffusion degree with the FoF algorithm and, by means of experiments on two networks, show that it outperforms some known baseline in terms of number of affected nodes under the ICM and LTM models.

To the best of our knowledge, under LTM, the graph modification problems that have been studied are those outlined in what follows. Khalil et al.~\cite{KDS14} consider two types of graph modification, adding edges to  or  deleting  edges  from  the  existing  network to minimize the information diffusion and they show that this network structure modification problem has a supermodular objective and therefore can be solved by algorithms with provable approximation guarantees. Zhang et al.~\cite{ZAVP15} consider arbitrarily specified groups of nodes, and edge and node removal from the groups. They develop algorithms with rigorous performance guarantees and good empirical performance. Kimura et al.~\cite{Kimura08} use a greedy approach to delete edges under the LTM  but do not provide any rigorous approximation guarantees. Kuhlman et al.~\cite{Kuhlman} propose heuristic algorithms for edge removal under a simpler deterministic variant of LTM which is  not  only  hard,  but  also  has  no  approximation  guarantee.  Papagelis~\cite{P15} and Crescenzi et al.~\cite{CDSV16} study the problem of augmenting the graph in order to increase the connectivity or the centrality of a node, respectively and experimentally show that this increases the expected number of eventual active nodes. Parotsidis et al.~\cite{PPT16} study the problem of recommending links with the objective of improving the centrality of a node within a network.

Under ICM, Wu et al.~\cite{WSZ15} consider graph modifications other than edge addition, edge deletion and source selection, such as increasing the probability that a node infects its neighbours. They proved that optimizing the selection of such modifications with a limited budget is $NP$-hard and is neither submodular nor supermodular. Sheldon et al.~\cite{Sheldon} study the problem of node addition to maximize the spread of information, and provide a counterexample showing that the objective  function  is not submodular. Kimura et al.~\cite{Kimura} propose methods for efficiently finding good approximate solutions on the basis of a greedy strategy for the edge deletion problem under the ICM, but do not provide any approximation guarantees.

In this paper, we adopt the independent cascade model and investigate the problem of adding a limited number of edges incident to an arbitrary set of initial seeds, without exceeding a given budget $k$, in order to maximize the spread of information in terms of number of nodes that eventually become active. The problem we analyze differs from above mentioned ones since we make the reasonable restriction that the edges to be added can only be incident to the seed nodes and that to add such edges there is a cost to be paid. To our knowledge, similar problems have never been studied for the independent cascade model. We refer to this problem as the \emph{Cost Influence Maximization with Augmentation problem} (\CICM).

\subsection{Our results}
We first focus on the unit-cost version of the problem that we call \emph{Influence Maximization with Augmentation problem} (\MICM). In such problem the cost of adding  any edge is constant and equal to $1$. We show that \MICM is $NP$-hard to be approximated within a constant factor greater than $1-(2e)^{-1}$ (Section~\ref{sec:inapx}). We then provide an approximation algorithm that almost matches such upper bound by guaranteeing an approximation factor of $1-(e)^{-1} - \epsilon$, where $\epsilon$ is any positive real number (Section~\ref{sec:greedy}). The algorithm is based on a greedy technique and the approximation factor is proven by showing that the expected number of activated nodes is monotonically increasing and submodular with respect to the possible set of edges incident to the seeds.

Then, we study the more general \CICM problem where we are given a budget $k$ and the cost of edges is in $[0, 1]$. We propose an algorithm that combines greedy and enumeration techniques and that achieves an approximation guarantee of  $1-(e)^{-1}$
 (Section~\ref{sec:approx_easy}).

Both the \MICM and \CICM problems are interesting: even though the former represents a limited number of real scenarios, there is a greedy approximation algorithm that guarantees an approximation factor of ${1-(e)^{-1}}$ exploiting the properties of submodular functions. The latter problem, instead, introduces a more flexible and general model for the application to link recommendation even if the greedy heuristic of \MICM can not be trivially generalized to achieve the same approximation for the budgeted version of the problem, we will show how to improve the approximation factor to $1-(e)^{-1}$ using the enumeration technique.
 
\section{Preliminaries} \label{sec:preliminaries}
A social network is represented by a weighted directed graph $G=(V, E, p, c)$, 
where $V$ represents the set of nodes, $E$ represents the set of relationships, $p:V\times V \rightarrow [0, 1]$ is the propagation probability of an edge, that is the probability that the information is propagated from $u$ to $v$ if $(u,v)\in E$, and $c:V\times V \rightarrow [0, 1]$ is the cost of adding an edge to $E$.

In ICM, each node can be either \emph{active} or \emph{inactive}. If a node is active (or infected), then it is already influenced by the information under diffusion, if a node is inactive, then it is unaware of the information or not influenced.
The process runs in discrete steps. At the beginning of the ICM process, few nodes are given the information, they are known as \emph{seed nodes}. Upon receiving the information these nodes become active. In each discrete step, an active node tries to influence one of its inactive neighbours.  The success of node $u$ in activating the node $v$ depends on the propagation probability of the edge $(u, v)$, independently of the history so far. In spite of its success, the same node will never get another chance to activate the same inactive neighbour. The process terminates when no further nodes became activated from inactive state.

We define the influence of a set $A\subseteq V$ in the graph $G$, denoted $\sigma(A, G)$, to be the expected number of active nodes in $G$ at the end of the process, given that $A$ is the initial set of seeds. Given a set $S$ of edges not in $E$, we denote by $G(S)$ the graph augmented by adding the edges in $S$ to $G$, i.e. $G(S) = (V, E \cup S)$. We denote by $\sigma(A, S)$ the influence of $A$ in $G(S)$.

In this paper, given a set of seeds $A$, we look for a set of edges $S$, to be added to $G$, incident to such seeds that maximize $\sigma(A,S)$. We assume that each edge $e\in (V\times V)\setminus E$ can be selected with cost $c_e\in [0,1]$. In detail, the \CICM problem is defined as follows: given a graph $G = (V, E)$, a budget $k$ and a set $A$ of seeds, find a set $S$ of edges such that $S \subseteq (A\times V)\setminus E$, $c(S)\leq k$, and $\sigma(A, S) $ is maximum, where $c(S) = \sum_{e\in S}c_e$. 

Moreover, we consider also the unit-cost version of our problem, we refer to it as the \MICM problem: 
it is a particular case of \CICM where each edge $e\in (V\times V)\setminus E$ has cost $c_e=1$ and where $c(S) = |S|$. 

We give now some definitions useful to prove our results.
We will use the definition of \emph{live-edge graph} $X=(V, E_X)$ which is a directed graph where the set of nodes is equal to $V$ and the set of edges is a subset of $E$. More specifically, the edge set $E_X$ is given by a edge selection process in which each edge in $E$ is either \emph{live} or \emph{blocked} according to its propagation probability. We can assume that, for each edge $e=(u,v)$ in the graph, a coin of bias $p_e$ is flipped and the edges for which the coin indicated an activation are live, the remaining are blocked. It is easy to show that the information diffusion process is equivalent to a reachability problem in live-edge graphs: given any seed set $A$, the distribution of active node sets after the diffusion process ends is the same as the distribution of node sets reachable from $A$ in live-edge graphs.

We denote by $\chi(S)$ the probability space in which each sample point specifies one possible set of outcomes for all the coin flips on the edges, that is the set of all possible live-edge graphs.
For a node $a\in V$ and a live-edge graph $X$ in $\chi(S)$, let $R(a, X)$ denote the set of all nodes that can be reached from $a$ in graph $X$, that is for each node $v\in R(a,X)$, there exists a path from $a$ to $v$ consisting entirely of live edges with respect to the outcome of the coin flips that generates $X$. Let $R(A, X)=\bigcup_{a \in A} R(a, X)$, then $\sigma(A, S)$ can be computed as 
$\sigma(A, S)=\sum_{X \in \chi(S)} \mathbb{P}[X]\cdot| R(A, X)|$.

Note that, the influence function $\sigma(A, S)$ cannot be evaluated exactly in polynomial time since it has been proven that it is generally $\# P$-complete for the Independent Cascade Model~\cite{CWW10}. However, by simulating the diffusion process sufficiently many times and sampling the resulting active sets, it is possible to obtain arbitrarily good approximations to $\sigma(A, S)$. The next proposition bounds the number of times that the diffusion process must be simulated to obtain a good approximation of   $\sigma(A, S)$.
\begin{theorem}[\cite{KKT15}]
If the diffusion process starting with $A$ on graph $G(S)$ is simulated at least $\Omega(\frac{n^2}{\lambda^2}\ln\frac{1}{\delta})$ times, then the average number of activated nodes over these simulations is a $(1+\lambda)$-approximation to $\sigma(A,S)$, with probability at least $1-\delta$.
\end{theorem}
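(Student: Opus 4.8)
The plan is to treat each simulation as an independent draw of a bounded random variable and conclude with a standard concentration inequality. Let $T$ be the number of simulations and, for $i=1,\dots,T$, let $Y_i$ be the number of nodes that are active at the end of the $i$-th independent simulation of the ICM diffusion started from $A$ on $G(S)$. By the live-edge characterization recalled above, each $Y_i$ has the same distribution as $|R(A,X)|$ for $X$ drawn from $\chi(S)$; hence the $Y_i$ are i.i.d., take values in $[\,|A|,n\,]\subseteq[0,n]$, and satisfy $\mathbb{E}[Y_i]=\sigma(A,S)$. Writing $\bar Y=\frac1T\sum_{i=1}^T Y_i$ for the empirical average returned by the procedure, linearity of expectation gives $\mathbb{E}[\bar Y]=\sigma(A,S)$.

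Next I would apply Hoeffding's inequality to the bounded i.i.d.\ variables $Y_1,\dots,Y_T$: for every $t>0$,
\[
\mathbb{P}\bigl[\,|\bar Y-\sigma(A,S)|\ge t\,\bigr]\le 2\exp\!\Bigl(-\tfrac{2Tt^2}{n^2}\Bigr).
\]
Taking $t=\lambda$ and forcing the right-hand side to be at most $\delta$ yields the requirement $T\ge \frac{n^2}{2\lambda^2}\ln\frac{2}{\delta}=\Omega\!\bigl(\frac{n^2}{\lambda^2}\ln\frac1\delta\bigr)$, which matches the number of simulations in the statement. Therefore, with probability at least $1-\delta$, the additive bound $|\bar Y-\sigma(A,S)|\le\lambda$ holds.

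To convert this additive estimate into the claimed multiplicative $(1+\lambda)$-approximation, I would invoke the trivial lower bound $\sigma(A,S)\ge |A|\ge 1$, since every seed is active when the process terminates. Then $\lambda\le\lambda\,\sigma(A,S)$, so the event $|\bar Y-\sigma(A,S)|\le\lambda$ implies $|\bar Y-\sigma(A,S)|\le\lambda\,\sigma(A,S)$; in particular $\bar Y$ lies within a $(1\pm\lambda)$ factor of $\sigma(A,S)$, as required.

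\textbf{Main obstacle.} There is essentially no obstacle: this is a textbook Hoeffding/Chernoff estimate. The only points needing care are (i) checking that the per-simulation counts are genuinely i.i.d.\ and uniformly bounded by $n$, which is immediate from the independence of the coin flips defining $\chi(S)$, and (ii) the step from an $\lambda$ additive error to the $(1+\lambda)$ multiplicative factor, which rests on $\sigma(A,S)\ge 1$. One could instead apply a multiplicative Chernoff bound directly to $\sum_i Y_i$, obtaining the even smaller sample size $O\!\bigl(\frac{n}{\lambda^2}\ln\frac1\delta\bigr)$, but the coarser Hoeffding argument already suffices to prove the statement as written.
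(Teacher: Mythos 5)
Your proof is correct and is exactly the standard argument: the paper imports this statement from \cite{KKT15} without proof, and the canonical justification is precisely the one you give --- the per-simulation active-set sizes are i.i.d.\ random variables bounded by $n$ with mean $\sigma(A,S)$, Hoeffding yields the $\Theta(\tfrac{n^2}{\lambda^2}\ln\tfrac{1}{\delta})$ sample bound for additive error $\lambda$, and the trivial bound $\sigma(A,S)\ge |A|\ge 1$ upgrades this to the multiplicative $(1+\lambda)$ guarantee. Your side remark that a multiplicative Chernoff bound would shave a factor of $n$ is also accurate, but not needed for the statement as written.
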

Therefore, in the rest of the paper we can assume that we can compute $\sigma(A,S)$ within an arbitrary bound. This reflects to an additional factor $1-\epsilon$, for any $\epsilon>0$, to all algorithms presented in this paper. 

Given a set of edges $S$, for each graph $X\in\chi(S)$ and subset of edges $T\subseteq S$, we denote by $X^T$ the graph obtained by removing edges in $T$ from $X$. To avoid cumbersome notation, when $X\setminus T = \{e\}$ we denote $X^{e} = X^{\{e\}}$.
Given two feasible solutions $S_1$ and $S_2$, such that $S_2\subseteq S_1$, we denote with $\delta(S_1,S_2)$ the expected number of nodes affected by $S_1$ and not affected by $S_2$, formally:
$\delta(S_1,S_2) = \sum_{X \in \chi(S_1)} \mathbb{P}[X]\cdot\left(|R(A, X)| - |R(A, X^T)|  \right)$, where  $T=S_1\setminus S_2$.

\section{The \MICM problem}
In this section we present our results for the \MICM problem, i.e. the unit-cost version of the \CICM problem. 
\subsection{Hardness of approximation }\label{sec:inapx}
We first show that the \MICM problem does not admit a PTAS, unless ${P=NP}$. The result holds even when $|A|=1$. In the next section, and in Section~\ref{sec:approx_easy}, we give an algorithm that almost matches the following upper bound on approximation.
\begin{theorem}\label{th:inapx}
It is $NP$-hard to approximate \MICM within a factor greater than $1-(2e)^{-1}$ for any $A$ such that $|A|\geq 1$.
\end{theorem}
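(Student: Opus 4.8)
The statement is an inapproximability result with a very specific constant, $1-(2e)^{-1}$. This constant is reminiscent of the Max-$k$-Cover hardness threshold $1-e^{-1}$ combined with some additional factor of $1/2$. The natural approach is a gap-preserving reduction from a problem whose hardness of approximation is governed by $1-e^{-1}$, most likely Max-$k$-Cover (equivalently Maximum Coverage), for which Feige's classical result gives $NP$-hardness of approximating within a factor better than $1-e^{-1}$. Let me think about how the factor $1/2$ enters.

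**The reduction.** Given a Max-$k$-Cover instance — a ground set $U = \{u_1,\dots,u_m\}$, a family of subsets $\mathcal{S} = \{S_1,\dots,S_t\}$, and a parameter $k$ — I would build an \MICM instance as follows. Take a single seed $a$ (so $|A|=1$, which matches the "$|A|\geq 1$" in the statement). Create a node $x_j$ for each set $S_j$, and a node $y_i$ for each element $u_i \in U$. The edges already in $G$: from each set-node $x_j$ to each element-node $y_i$ with $u_i \in S_j$, and set the propagation probability on these edges to $1$ (deterministic propagation). The seed $a$ starts with no edges to the $x_j$'s. The allowed edge additions are exactly the edges $(a, x_j)$, each of unit cost, and the budget is $k$. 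Then choosing a set $S$ of $k$ edges from $a$ corresponds to choosing $k$ sets from $\mathcal{S}$; the nodes activated are $a$ itself, the chosen $x_j$'s, and all $y_i$'s covered by the chosen sets. So $\sigma(a, S) = 1 + k + |\bigcup_{j \in S} S_j|$. Maximizing $\sigma$ is exactly maximizing the coverage. But this gives a reduction preserving the *additive* gap, not a clean multiplicative $1-e^{-1}$ factor, because of the additive $1+k$ term — and crucially it does not produce the $1/2$. So I need to be cleverer.

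**Getting the factor $1/2$: adding a "doubling gadget" via probabilistic edges.** Here is where the $\tfrac12$ should come from. I would attach to each element-node $y_i$ a large set of $N$ fresh "leaf" nodes reachable from $y_i$ via edges of probability $1$, so that activating $y_i$ contributes roughly $N$ to the influence; by taking $N \gg m, t, k$ this makes the additive terms $1+k$ and the $x_j$ count negligible, so $\sigma(a,S) \approx N \cdot |\bigcup_{j\in S} S_j|$ and a multiplicative $1-e^{-1}$ gap in coverage transfers to a multiplicative $1-e^{-1}$ gap in $\sigma$. To degrade this to $1-(2e)^{-1}$, I suspect the trick is that the seed is *forced* to also be able to reach, "for free" or at cost not counted against coverage, a set of nodes whose size equals the best possible coverage value regardless of which sets are chosen — effectively the objective becomes (fixed bonus) $+ $ (coverage), where the fixed bonus equals the optimum coverage $\mathrm{OPT}$. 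Then $\sigma_{\max} = \mathrm{OPT} + \mathrm{OPT} = 2\,\mathrm{OPT}$, while any algorithm achieving coverage $c$ gets $\mathrm{OPT} + c$; if $c \le (1-e^{-1})\mathrm{OPT}$ is all that is achievable in poly time, then the best achievable $\sigma$ is $(2 - e^{-1})\mathrm{OPT} = (1 - (2e)^{-1})\cdot 2\,\mathrm{OPT} = (1-(2e)^{-1})\sigma_{\max}$. So the concrete task is to engineer, using only edges incident to the seed $a$ and respecting the budget $k$, a "guaranteed bonus" component whose value is always exactly $\mathrm{OPT}$ no matter which $k$ edges are picked — perhaps by a separate parallel gadget that any budget-$k$ solution is forced to fully buy, or by using probabilities to make one canonical choice always attainable.

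**Main obstacle and how I'd handle it.** The delicate part is precisely making the "bonus" exactly $\mathrm{OPT}$ and independent of the solution, while keeping the instance polynomial-size and the reduction gap-preserving; $\mathrm{OPT}$ itself is not known, so the gadget must produce value $\mathrm{OPT}$ structurally rather than by hard-coding it. A clean way is to exploit that in Feige's hard instances one may assume a *perfect cover* exists in the YES case — i.e., $\mathrm{OPT} = m$ (every element coverable by $k$ sets) in the completeness case, while in the soundness case no poly-time algorithm covers more than $(1-e^{-1}+o(1))m$ elements. Then the bonus just needs to be a gadget of exactly $m$ nodes that every feasible solution activates — e.g., a path or star of $m$ nodes hanging off the seed through a single mandatory added edge, but since all added edges must be incident to $a$ and cost $1$, I'd instead put these $m$ bonus nodes directly reachable from $a$ in the *original* graph $G$ (no edge addition needed), so they are always counted. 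I would then carefully verify: (i) feasibility — all added edges are in $(A\times V)\setminus E$ and there are at most $k$ of them; (ii) completeness — a perfect cover yields $\sigma = m + m + O(1) \approx 2m$ after scaling by $N$; (iii) soundness — any solution with $\sigma \ge (1-(2e)^{-1}+\gamma)\cdot 2m \cdot N$ would have to achieve coverage $> (1-e^{-1})m$, contradicting Feige; and (iv) the scaling factor $N$ swamps all additive lower-order terms so these inequalities are clean. Padding/standard NP-hardness arguments (rather than hardness-assuming arguments) can replace Feige's if one only wants $NP$-hardness of the *exact* threshold rather than under complexity assumptions, but I expect the paper uses the Max-Coverage-style reduction directly.
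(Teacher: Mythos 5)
Your proposal reaches the correct constant and the skeleton of the reduction (a single seed $a$, a node per set, a node per element, deterministic edges from set--nodes to the elements they contain, and the $k$ candidate edges $(a,v_{S_j})$) is exactly the paper's construction. Where you genuinely diverge is in how the factor $\tfrac{1}{2}$ is extracted. You dismiss the plain reduction on the grounds that it ``does not produce the $1/2$,'' and then engineer it back in via a padding-plus-bonus gadget: blow up each element's contribution by a factor $N$ so the $k$ set--nodes become negligible, and attach a separate always-reachable gadget of value $\mathrm{OPT}$ (using the fact that Feige's YES instances admit a perfect cover, so $\mathrm{OPT}=m$), yielding $\sigma_{\max}=2\,\mathrm{OPT}$ and a direct gap argument. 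The paper instead observes that the plain reduction \emph{already} produces the $1/2$: the objective is $\sigma(A,S)=k+s(S_{\MSC})$, and since one may assume $\mathrm{OPT}(I_{\MSC})\geq k$ (otherwise greedy is optimal for \MSC), the additive term $k$ plays precisely the role of your ``bonus of size at most $\mathrm{OPT}$.'' Formally the paper packages this as an L-reduction with parameters $a=2$ (from $\mathrm{OPT}(I_{\MICM})=\mathrm{OPT}(I_{\MSC})+k\leq 2\,\mathrm{OPT}(I_{\MSC})$) and $b=1$, which immediately gives $\alpha<1-\frac{1}{2e}$. So the paper's route needs no padding, no extra gadget, and no appeal to the perfect-cover promise; your route is sound but buys nothing extra here --- indeed, if you padded \emph{without} adding the bonus you would prove the stronger bound $1-e^{-1}$, which already implies the theorem, so the deliberate degradation step is unnecessary work.

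Two details you would still need to nail down. First, in \MICM a solution may contain \emph{any} edge of $(A\times V)\setminus E$, so you cannot simply declare that ``the allowed edge additions are exactly the edges $(a,x_j)$'': a solution could attach $a$ directly to element--nodes, leaves, or your bonus gadget and harvest $N$-sized groups without buying any set, which threatens soundness in the NO case. The paper closes this by setting the propagation probability of every edge from $a$ to a non-set-node to $0$; you need the same device. Second, your bonus gadget must itself be scaled to $mN$ nodes (not $m$), otherwise the padding you introduced to kill the $k$ term also kills the bonus; your phrase ``after scaling by $N$'' suggests you intend this, but it should be stated explicitly.
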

\begin{proof}
The proof is based on a reduction from the \emph{maximum set coverage problem} (\MSC) which has been shown to be $NP$-hard to approximate within a factor greater than $1-\frac{1}{e}$~\cite{F98}. In detail, in the \MSC problem, we are given a finite set $X$, a finite family $\mathcal{F}$ of subsets of $X$, and an integer $k'$, and we aim at finding a family $\mathcal{F}'\subseteq\mathcal{F}$ such that $|\mathcal{F}'|\leq k'$ and $s(\mathcal{F}') = |\cup_{S_i\in \mathcal{F}'}S_i|$ is maximum.

We follow the scheme of L-reductions~\cite[Chapter~16]{WS11}, since it has been shown that if there is an L-reduction with parameters $a$ and $b$ from maximization problem $\Pi$ to maximization problem $\Pi'$ , and there is an $\alpha$-approximation algorithm for $\Pi'$, then there is an $(1 - ab(1 - \alpha))$-approximation algorithm for $\Pi$~\cite[Chapter~16]{WS11}.
In our specific case, if there exists an $\alpha$-approximation algorithm $A_\MICM$ for \MICM and the following two conditions are satisfied for some values $a$ and $b$:
 \begin{align}
  OPT(I_\MICM)&\leq a OPT(I_\MSC)\label{lreduction:one}\\
  OPT(I_\MSC) - s(S_\MSC) &\leq b \left(OPT(I_\MICM) - \sigma(A,S_\MICM)\right),\label{lreduction:two}
 \end{align}
where $OPT$ denotes the optimal value of an instance of an optimization problem, then there exists an approximation algorithm $A_\MSC$ for \MSC with approximation an factor of  $1-ab(1-\alpha)$. Since it is $NP$-hard to approximate \MSC within a factor greater than $1-\frac{1}{e}$~\cite{F98}, then the approximation factor of $A_\MSC$ must be smaller than $1-\frac{1}{e}$, unless $P=NP$. This implies that $1-ab(1-\alpha) < 1-\frac{1}{e}$ that is, the approximation factor $\alpha$ of $A_\MICM$ must satisfy $\alpha < 1-\frac{1}{abe}$, unless $P=NP$.

 Therefore, in what follows we give a polynomial-time algorithm that transforms any instance $I_\MSC=(X,\mathcal{F},k')$ of \MSC into an instance $I_\MICM=(G,A,k)$ of $\MICM$ and a polynomial-time algorithm that transforms any solution $S_\MICM$ for $I_\MICM$ into a solution $S_\MSC$ for $I_\MSC$ such that the above two conditions are satisfied.

 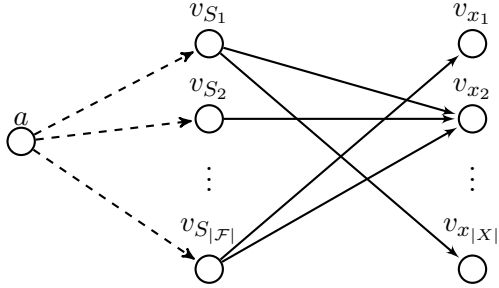
\begin{figure}[t]
 \centering
  \begin{tikzpicture}[->,>=stealth',shorten >=1pt,auto,node distance=3cm,
  thick,main node/.style={circle,draw}]
  \tikzset{arc/.style = {->,> = latex'}}
  \node (x1) at (6,3.4) {$v_{x_1}$};
  \node[main node] (x11) at (6,3) {};
  
  \node (x2) at (6,2.4) {$v_{x_2}$};
  \node[main node] (x21) at (6,2) {};

  \node (x3) at (6,1.3) {$\vdots$};

  \node (xn) at (6,0.5) {$v_{x_{|X|}}$};
  \node[main node] (xn1) at (6,0) {};

  \node (s1) at (2.5,3.4) {$v_{S_1}$};
  \node[main node] (s11) at (2.5,3) {};
  
  \node (s2) at (2.5,2.4) {$v_{S_2}$};
  \node[main node] (s21) at (2.5,2) {};

  \node (s3) at (2.5,1.3) {$\vdots$};

  \node (sn) at (2.5,0.5) {$v_{S_{|\mathcal{F}|}}$};
  \node[main node] (sn1) at (2.5,0) {};

  \node (a) at (0,2) {$a$};
  \node[main node] (a1) at (0,1.7) {}; 
  
  \draw[arc] (s21) to (x21);
  \draw[arc] (s11) to (x21);  
  \draw[arc] (s11) to (xn1);    
  \draw[arc] (sn1) to (x21);   
  \draw[arc] (sn1) to (x11);   
  \draw[->, dashed] (a1) to (s11);
  \draw[->, dashed] (a1) to (s21);
  \draw[->, dashed] (a1) to (sn1);
\end{tikzpicture}
\caption{Reduction used in Theorem~\ref{th:inapx}. The dashed arcs denote those added in a solution.}
\label{fig:inapx}
 \end{figure}
 
Given $I_\MSC=(X,\mathcal{F},k')$, we define $I_\MICM=(G,A,k)$ where $G=(V,E, p)$ is a directed graph containing a node $v_{x_i}$ for each element $x_i\in X$, a node $v_{S_j}$ for each set $S_j\in\mathcal{F}$, a seed node $a$, and a directed edge $(v_{S_j},v_{x_i})$, whenever $x_i \in S_j$. We define $k=k'$ and $A=\{a\}$. We denote by $V_{\mathcal{F}}$ the set of nodes corresponding to sets in $\mathcal{F}$. The propagation probability $p_e$ is equal to 1 if $e\in E \cup A\times V_{\mathcal{F}}$ and 0 otherwise. See Figure~\ref{fig:inapx} for a visualization. Note that in $I_{\MICM}$, the information diffusion is a deterministic process, as all probabilities are 0 or 1.  Moreover, any solution $S$ for $I_\MICM$ contains only edges $(a,v_{S_j})$, for some $S_j \in \mathcal{F}$, since any other edge would have probability 0 of being activated. Given a solution $S_\MICM=\{(a,v_{S_j})~|~S_j\in\mathcal{F}\}$ to $I_\MICM$, we construct the solution $S_\MSC=\{S_j~|~(a,v_{S_j})\in S_\MICM\}$ to $I_\MSC$. 
W.l.o.g., we can assume that $|S_\MICM|=k$ and since by construction $|S_\MSC| = |S_\MICM|$, then $|S_\MSC| = k=k'$.

The nodes influenced by node $a$ in $S_\MICM$ are all nodes $v_{S_j}$ such that $(a,v_{S_j}) \in S_\MICM$ and all the nodes $v_{x_i}$ such that $x_i\in S_j$, for some $S_j$ such that $(a,v_{S_j}) \in S_\MICM$. The nodes of the former type are $k$, while the nodes of the latter type are all those nodes $v_{x_i}$ such that $x_i\in\cup_{(a,v_{S_j}) \in S_\MICM} S_j$. Therefore, $\sigma(A,S_\MICM) = k + |\cup_{(a,v_{S_j}) \in S_\MICM} S_j| = k + s(S_\MSC)$.

It follows that Conditions~(\ref{lreduction:one}) and~(\ref{lreduction:two}) are satisfied for $a=2$, $b=1$ since: $OPT(I_\MICM) =  {OPT(I_\MSC) + k \leq 2OPT(I_\MSC)}$ and ${OPT(I_\MSC) - s(S_\MSC)} = {OPT(I_\MICM) - \sigma(A,S_\MICM)}$, where the first inequality is due to the fact that $OPT(I_\MSC)\geq k$, as otherwise the greedy algorithm finds an optimal solution for \MSC.
The statement follows by plugging the values of $a$ and $b$ into $\alpha < 1-\frac{1}{abe}$. This concludes the proof for $|A|=1$. We can extend to cases in which $|A|>1$ by adding nodes in $A$ that can only form edges with propagation probability equal to 0.
\end{proof}

\subsection{Greedy approximation algorithm}\label{sec:greedy}

In this section, we propose an algorithm that guarantees a constant approximation ratio for the \MICM problem. The algorithm exploits the results of Nemhauser et al. on the approximation of monotone submodular objective functions~\cite{NWF78}. Let us consider the following optimization problem: given a finite set $N$, an integer $k'$, and a real-valued function $z$ defined on the set of subsets of $N$, find a set $S\subseteq N$ such that $|S|\leq k'$ and $z(S)$ is maximum. If $z$ is \emph{monotone and submodular}\footnote{For a ground set $N$, a function $z:2^N\rightarrow \mathbb{R}$ is submodular if for any pair of sets $S\subseteq T \subseteq N$ and for any element $e\in N\setminus T$, $z(S\cup\{e\}) - z(S) \geq z(T\cup \{e\}) - z(T)$.}, then the following greedy algorithm exhibits an approximation of $1-\frac{1}{e}$~\cite{NWF78}: start with the empty set, and repeatedly add an element that gives the maximal marginal gain, that is if $S$ is a partial solution, choose the element $j\in N\setminus S$ that maximizes $z(S\cup\{j\})$.

\begin{theorem}[\cite{NWF78}]\label{th:submodular}
For a non-negative, monotone submodular function $z$, let $S$ be a set of size $k$ obtained by selecting elements one at a time, each time choosing an element that provides the largest marginal increase in the value of $z$. Then $S$ provides a $\left(1-\frac{1}{e}\right)$-approximation.
\end{theorem}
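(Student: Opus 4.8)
The plan is to reproduce the classical telescoping argument of Nemhauser, Wolsey and Fisher. I would fix an optimal solution $S^\star$ with $|S^\star|\le k$, and let $S_0=\emptyset\subseteq S_1\subseteq\cdots\subseteq S_k=S$ be the chain of partial solutions produced by the greedy rule, so that $S_{i+1}=S_i\cup\{e_{i+1}\}$ where $e_{i+1}$ maximizes the marginal gain $z(S_i\cup\{e\})-z(S_i)$ over $e\notin S_i$. Writing $g_{i+1}=z(S_{i+1})-z(S_i)$ for the realized gain and $\delta_i=z(S^\star)-z(S_i)$ for the residual gap, the whole proof reduces to showing that $\delta_i$ contracts by a constant factor at each step.

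The key step will be the inequality $g_{i+1}\ge\frac1k\,\delta_i$. To establish it I would enumerate $S^\star\setminus S_i=\{o_1,\dots,o_t\}$ with $t\le k$ and telescope: monotonicity gives $z(S^\star)\le z(S^\star\cup S_i)$, and writing $z(S^\star\cup S_i)$ as $z(S_i)$ plus the successive marginal gains of adding $o_1,\dots,o_t$ on top of $S_i$, submodularity bounds each such marginal gain by $z(S_i\cup\{o_j\})-z(S_i)$, which the greedy choice of $e_{i+1}$ in turn bounds by $g_{i+1}$; summing the $t\le k$ terms yields $z(S^\star)-z(S_i)\le k\,g_{i+1}$, i.e. $g_{i+1}\ge\delta_i/k$.

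Then the recursion closes immediately: $\delta_i-\delta_{i+1}=g_{i+1}\ge\delta_i/k$ gives $\delta_{i+1}\le(1-\tfrac1k)\delta_i$, and iterating from $i=0$ together with the bound $(1-\tfrac1k)^k\le e^{-1}$ yields $\delta_k\le e^{-1}\delta_0$. Since $z$ is non-negative, $\delta_0=z(S^\star)-z(\emptyset)\le z(S^\star)$, so $z(S)=z(S^\star)-\delta_k\ge(1-e^{-1})\,z(S^\star)$, which is the claim.

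I expect the only subtle point to be the key step, where one must check that the order-of-insertion telescoping is valid for an arbitrary set function and, in particular, that it is \emph{monotonicity} rather than submodularity that licenses dropping the overlap $S^\star\cap S_i$ in $z(S^\star)\le z(S^\star\cup S_i)$; the rest is bookkeeping. A minor point worth recording is the case $|S^\star|<k$: monotonicity also gives $g_{i+1}\ge0$, so forcing the greedy algorithm to keep adding elements up to size $k$ never decreases the objective and the same chain of inequalities applies verbatim.
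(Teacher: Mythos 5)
Your argument is correct: it is the standard Nemhauser--Wolsey--Fisher telescoping proof, with the key inequality $z(S^\star)-z(S_i)\leq k\,g_{i+1}$ obtained exactly as in the classical reference, and the contraction $\delta_{i+1}\leq(1-\frac{1}{k})\delta_i$ closed via $(1-\frac{1}{k})^k\leq e^{-1}$ and $z(\emptyset)\geq 0$. The paper itself gives no proof of this statement (it imports it directly from \cite{NWF78}), so there is nothing to compare against; your reconstruction, including the remark that monotonicity rather than submodularity justifies $z(S^\star)\leq z(S^\star\cup S_i)$ and the handling of $|S^\star|<k$, is complete and accurate.
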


\begin{algorithm2e}[t]
\caption{\Greedy \MICM algorithm.}
\label{alg:greedy}
\SetKwInput{Proc}{algorithm}
\SetKwInOut{Input}{Input}
\SetKwInOut{Output}{Output}
\Input{A directed graph $G=(V,E)$; a set of vertices $A\subseteq V$; and an integer $k\in\mathbb{N}$}
\Output{Set of edges $S\subseteq (A\times V)\setminus E$ such that $|S|\leq k$}
$S:=\emptyset$\;
\For{$i=1,2,\ldots,k$}
{
  $\hat{e} = \arg\max\{ \sigma(A,S\cup \{e\})~|~e=(a,v) \in (A\times V)\setminus (E\cup S)\}$\;
  $S := S \cup \{\hat{e}\}$\;
}
\Return $S$\;
\end{algorithm2e}
In this paper, we exploit such results by showing that $\sigma(A,\cdot)$ is monotone and submodular with respect to the possible set of edges incident to nodes in $A$. In fact, assuming that for a set $S\subseteq (A\times V)\setminus E$ we are able to compute $\sigma(A,S)$,\footnote{We showed in Section~\ref{sec:preliminaries} how to find an arbitrarily good approximation of $\sigma(A,S)$ in polynomial time.} then algorithm~\ref{alg:greedy} provides a  $\left(1-\frac{1}{e}\right)$-approximation. algorithm~\ref{alg:greedy} iterates $k$ times and, at each iteration, it adds to an initially empty solution $S$ an edge $\hat{e}=(\hat{a}, \hat{v})$ s.t. $(\hat{a}, \hat{v})\in (A\times V)\setminus E$ that, when added to $S$, gives the largest marginal increase in the value of $\sigma(A,S)$, that is $\sigma(A,S\cup \{\hat{e}\})$ is maximum among all the possible edges in $(A\times V)\setminus (E\cup S)$ to be added to $S$. The next theorem shows that $\sigma(A,\cdot)$ is monotone and submodular.

\begin{theorem}\label{th:submodular:MICM}
Given a graph $G=(V, E, p)$, $\sigma(A, S)$ is a monotonically increasing submodular function of sets $S\subseteq (A\times V)\setminus E$.
\end{theorem}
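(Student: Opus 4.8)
The plan is to adapt the live-edge graph technique of Kempe--Kleinberg--Tardos, exploiting one feature special to our setting: since every candidate edge leaves a seed, adding a candidate edge is, conditioned on that edge being live, the same as adding a new seed to a fixed live-edge graph. First I would couple all the randomness into a single probability space $\Omega$: independently, each edge $e\in E$ and each potential edge $e\in(A\times V)\setminus E$ is declared \emph{live} with probability $p_e$ and \emph{blocked} otherwise. For an outcome $\omega\in\Omega$ and a feasible set $S\subseteq(A\times V)\setminus E$, let $X_S(\omega)$ be the live-edge graph whose edge set consists of the live edges among $E\cup S$. The marginal law of the live edges of $E\cup S$ under $\Omega$ agrees with the distribution over $\chi(S)$, so $\sigma(A,S)=\sum_{\omega}\mathbb{P}[\omega]\cdot f_\omega(S)$ with $f_\omega(S):=|R(A,X_S(\omega))|$. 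Since a nonnegative combination of monotone submodular set functions is again monotone and submodular, it suffices to show that each $f_\omega$ is monotone and submodular on subsets of $(A\times V)\setminus E$.

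The core step is a reduction to the classical "add seeds" problem. Fix $\omega$, let $X=X_\emptyset(\omega)$ be the live subgraph of $E$ alone, and for a set $S$ let $B_\omega(S):=\{v\in V:\ (a,v)\in S\text{ and }(a,v)\text{ is live in }\omega\text{ for some }a\in A\}$. Because every candidate edge has its tail in $A$, and every seed belongs to any reachable set from $A$, a live candidate edge $(a,v)$ can always be traversed from the start; hence $R(A,X_S(\omega))=R\big(A\cup B_\omega(S),X\big)$, so that $f_\omega(S)=g(A\cup B_\omega(S))$ where $g(Y):=|R(Y,X)|$. I would verify both inclusions of this identity by following, in a path witnessing reachability in $X_S(\omega)$, the last candidate edge used. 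Then I would record the standard fact that $g$ is monotone and submodular in $Y$: from $R(Y\cup Y',X)=R(Y,X)\cup R(Y',X)$ one gets $g(Y\cup\{u\})-g(Y)=|R(u,X)\setminus R(Y,X)|$, which is nonnegative and nonincreasing in $Y$.

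Finally I would transfer monotonicity and submodularity from $g$ to $f_\omega$. Monotonicity is immediate: $S\subseteq T$ gives $B_\omega(S)\subseteq B_\omega(T)$, so $A\cup B_\omega(S)\subseteq A\cup B_\omega(T)$ and $g$ is monotone. For submodularity, fix $S\subseteq T$ and $e=(a,v)\in(A\times V)\setminus(E\cup T)$ and split on whether $e$ is live in $\omega$. If $e$ is blocked, $B_\omega(S\cup\{e\})=B_\omega(S)$ and $B_\omega(T\cup\{e\})=B_\omega(T)$, so both marginal gains vanish. If $e$ is live, the two marginal gains equal $g(A\cup B_\omega(S)\cup\{v\})-g(A\cup B_\omega(S))$ and $g(A\cup B_\omega(T)\cup\{v\})-g(A\cup B_\omega(T))$; since $A\cup B_\omega(S)\subseteq A\cup B_\omega(T)$, submodularity of $g$ shows the former dominates the latter (this also covers the degenerate case where $v$ already lies in $A\cup B_\omega(S)$). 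Averaging over $\omega$ then yields that $\sigma(A,\cdot)$ is monotone and submodular. The only genuinely delicate point is the reduction of the second paragraph: making precise that anchoring the new edges at the seeds collapses the edge-addition problem onto the classical seed-addition problem in a fixed live-edge graph; once that coupling is justified, the rest is the familiar reachability computation.
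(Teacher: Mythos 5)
Your proof is correct, and it takes a genuinely different route from the paper's. The paper proves monotonicity by a direct computation that splits each live-edge graph $X\in\chi(S)$ into the two outcomes $X^+,X^-$ of the coin for the new edge, and proves submodularity via a separate deterministic lemma (Lemma~\ref{lem:submodular:R}) asserting that for nested live-edge graphs $E_X\subseteq E_Y$ differing only in edges of $A\times V$, the reachability gain from inserting one more edge of $A\times V$ is no larger in $Y$ than in $X$; the probabilistic bookkeeping is then carried out by refining $\chi(T)$ into the classes $\chi(T,X)$ indexed by $X\in\chi(S)$. You instead couple all coin flips (for $E$ and for every candidate edge) into one space up front, and observe that, conditioned on the flips, a live candidate edge $(a,v)$ with $a\in A$ is equivalent to promoting $v$ to a seed of the fixed live graph $X$; this collapses the edge-addition problem onto the classical seed-addition problem, whose objective $Y\mapsto|R(Y,X)|$ is the standard monotone submodular coverage function of Kempe et al. The deterministic cores of the two arguments are essentially the same fact --- reachability marginal gains shrink as the ambient graph or seed set grows --- but your global coupling and the reduction to seed addition let you quote the classical result, treat monotonicity and submodularity uniformly, and dispense with the $X^+/X^-$ and $\chi(T,X)$ machinery. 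The one step you flag as delicate, the identity $R(A,X_S(\omega))=R(A\cup B_\omega(S),X)$, is indeed exactly where the hypothesis that all new edges leave $A$ is used (it would fail for arbitrary edge additions), and your ``last candidate edge on the path'' argument verifies both inclusions correctly.
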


In order to prove the theorem, we first show, in the next lemma, that function $R(A,\cdot)$ is submodular with respect to the insertion of edges outgoing nodes in $A$ to live-edge graphs. Note that this is a deterministic property. 

\begin{lemma}\label{lem:submodular:R}
Given a set of nodes nodes $A\subseteq V$, two live-edge graphs $X$ and $Y$ such that $E_X\subseteq E_Y$ and $E_Y\setminus E_X \subseteq A\times V$, and an edge $e\in (A\times V)\setminus E$, let $X^+$ and $Y^+$ denote the live-edge graphs obtained by adding $e$ to $X$ and $Y$, respectively, that is $X^+=(V,E_X\cup\{e\})$ and $Y^+=(V,E_Y\cup\{e\})$. Then,
$
|R(A,Y^+)| - |R(A, Y)| \leq |R(A,X^+)| - |R(A, X)|.
$
\end{lemma}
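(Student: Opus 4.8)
The plan is to analyze how adding the single edge $e=(a,w)$, with $a\in A$ and $w\in V$, changes the reachable set $R(A,\cdot)$ in a given live-edge graph. The key observation is that in any live-edge graph $Z$, adding $e$ can only enlarge $R(A,Z)$, and the newly reachable nodes are exactly those nodes reachable from $w$ in $Z$ that were not already reachable from $A$ in $Z$; formally $R(A,Z^+)\setminus R(A,Z)= R(w,Z)\setminus R(A,Z)$ (here $Z^+ = (V,E_Z\cup\{e\})$, and note that since $E_Z$ and $\{e\}$ differ only by edges out of $A$, the nodes reachable through $e$ in $Z^+$ are precisely those reachable from $w$ using edges of $Z$). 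Hence $|R(A,Z^+)|-|R(A,Z)| = |R(w,Z)\setminus R(A,Z)|$. So the inequality we must prove reduces to
\[
|R(w,Y)\setminus R(A,Y)| \;\le\; |R(w,X)\setminus R(A,X)|.
\]

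The main step, and the place where I expect the real work to be, is establishing this last inequality from the hypotheses $E_X\subseteq E_Y$ and $E_Y\setminus E_X\subseteq A\times V$. First, since $X$ is a subgraph of $Y$, reachability is monotone: $R(w,X)\subseteq R(w,Y)$ and $R(A,X)\subseteq R(A,Y)$. The crux is to show the stronger statement that every node in $R(w,Y)\setminus R(A,Y)$ already lies in $R(w,X)$; combined with $R(A,X)\subseteq R(A,Y)$, i.e. $R(w,Y)\setminus R(A,Y)\subseteq R(w,X)\setminus R(A,X)$, this yields the bound (in fact with the stronger set-containment, not just the cardinality inequality). To prove the claim, take $u\in R(w,Y)\setminus R(A,Y)$ and a path $P$ from $w$ to $u$ in $Y$. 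Any edge of $P$ that is in $E_Y\setminus E_X$ is an edge out of some node in $A$; but if $P$ used such an edge $(a',v')$ with $a'\in A$, then $u$ would be reachable from $a'\in A$ in $Y$ via the suffix of $P$, contradicting $u\notin R(A,Y)$. Therefore $P$ uses only edges of $E_X$, so $u\in R(w,X)$, proving the claim.

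Putting the pieces together: $|R(A,Y^+)|-|R(A,Y)| = |R(w,Y)\setminus R(A,Y)| \le |R(w,X)\setminus R(A,X)| = |R(A,X^+)|-|R(A,X)|$, which is exactly the statement of the lemma. One should also dispatch the trivial/degenerate cases for completeness — namely when $e$ is already present in $X$ or $Y$ (then the corresponding difference is $0$ and the inequality still holds since the right side is nonnegative), and when $w\in R(A,X)$ or $w\in R(A,Y)$ (handled uniformly by the set-difference formulation above). I expect the only subtlety to be the argument in the previous paragraph that a path witnessing $u\in R(w,Y)$ but $u\notin R(A,Y)$ cannot use any of the extra edges $E_Y\setminus E_X$; once that is in place the rest is bookkeeping with reachable sets.
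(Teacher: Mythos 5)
Your proof is correct and follows essentially the same route as the paper's: both compare the marginal gains by showing that the set of nodes newly reached via $e$ in $Y^+$ is contained in the corresponding set for $X^+$. In fact your write-up is more complete than the paper's, which merely asserts that ``the set of nodes reachable from $A$ only by means of $e$'' is smaller in $Y^+$ than in $X^+$; your path argument (a $w$-to-$u$ path witnessing $u\in R(w,Y)\setminus R(A,Y)$ cannot use any edge of $E_Y\setminus E_X$, since all such edges leave $A$) supplies exactly the justification the paper omits.
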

\begin{proof}
Let $r(X,e) = R(A,X^+) \setminus R(A,X)$, that is, $r(X,e)$ is the set of nodes that are reachable from $A$ in $X^+$ by means of edge $e$ and that are not reachable in $X$. Similarly, let $r(Y,e) = R(A,Y^+) \setminus R(A,Y)$. Since $E_X\subseteq E_Y$ and $E_Y\setminus E_X \subseteq A\times V$, then $R(A,X)\subseteq R(A,Y)$ and the set of nodes that are reachable from $A$ \emph{only} by means of edge $e$ is smaller in $Y^+$ than in $X^+$. It follows that $|r(X,e)|\geq |r(Y,e)|$. Therefore, $|R(A,Y^+)| - |R(A, Y)| =  |r(X,e)|\geq |r(Y,e)| = |R(A,X^+)| - |R(A, X)|$.
\end{proof}

We can now prove Theorem~\ref{th:submodular:MICM}.
\begin{proof}[Theorem~\ref{th:submodular:MICM}]
To prove that $\sigma$ is a monotonically increasing function, we show that for each $S \subseteq (A\times V)\setminus E$ and $e=(a,v)\in (A\times V)\setminus (E\cup S)$, $\sigma(A,S\cup{e})-\sigma(A, S) \geq 0$, that is
\begin{equation}\label{eq:monotonic}
\sum_{X \in \chi(S\cup\{e\})} \mathbb{P}[X]\cdot|R(A, X)|-\sum_{X \in \chi(S)} \mathbb{P}[X]\cdot|R(A, X)| \geq 0.
\end{equation}
For each live-edge graph $X$ in $\chi(S))$, there are two different corresponding live-edge graph $X^+$ and $X^-$ in $\chi(S\cup\{e\})$, whose edge sets depend on the outcome of the coin flipped for $e$, that is $E_{X^{+}} = E_X\cup\{e\}$ and $E_{X^{-}} = E_X$, respectively. The probabilities for the live-edge graph $X^{+}$ and $X^{-}$ to occur, are: $\mathbb{P}[X^{+}]=\mathbb{P}[X]\cdot p_{e}$ and $\mathbb{P}[X^{-}]=\mathbb{P}[X]\cdot (1-p_{e})$, while the set of reachable nodes are such that $R(A, X)\subseteq R(A, X^{+})$ and $R(A, X) = R(A, X^{-})$, because in $X^{+}$ there is one more edge $e$ and $X^{-} = X$. Therefore, $|R(a, X^{+})|\geq |R(a, X)|$ and we can write Inequality~\eqref{eq:monotonic} as:
\begin{align*}
 \sum_{X \in \chi(S)}\left( \mathbb{P}[X^{+}]\cdot |R(A, X^{+})| + \mathbb{P}[X^{-}]\cdot |R(A, X^{-})|\right) - \sum_{X \in \chi(S)}\mathbb{P}[X]\cdot |R(A, X)| =\\
  \sum_{X \in \chi(S)}\left( \mathbb{P}[X]\cdot p_e\cdot |R(A, X^{+})| + \mathbb{P}[X]\cdot (1-p_e)\cdot |R(A, X^{-})| - \mathbb{P}[X]\cdot |R(A, X)|  \right)\geq
 \end{align*}
\begin{align*}
  \sum_{X \in \chi(S)}\left( \mathbb{P}[X]\cdot p_e\cdot |R(A, X)| + \mathbb{P}[X]\cdot (1-p_e)\cdot |R(A, X)| - \mathbb{P}[X]\cdot |R(A, X)| \right) = 0.
\end{align*}
this shows that $\sigma$ is a monotonically increasing.

To prove the submodularity, we show  that for each pair of sets $S,T$ such that $S \subseteq T \subset (A\times V)\setminus E$ and for each $e=(a,v)\in (A\times V)\setminus T$, the increment in expected number of influenced nodes that edge $e$ causes in $S \cup \{e\}$ is larger than the increment it produces in $T \cup \{e\}$, that is $\sigma(A, S \cup \{e\})-\sigma(A, S) \geq \sigma(A, T \cup \{e\})-\sigma(A, T)$, or
\begin{eqnarray}
 \sum_{X \in \chi(S\cup\{e\})} \mathbb{P}[X]\cdot |R(A, X)|-\sum_{X \in \chi(S)} \mathbb{P}[X]\cdot |R(A, X)| \geq\label{eq:submodular:S} \\
 \sum_{X \in \chi(T\cup\{e\})} \mathbb{P}[X]\cdot |R(A, X)|-\sum_{X \in \chi(T)} \mathbb{P}[X]\cdot |R(A, X)|\label{eq:submodular:T}.
\end{eqnarray}

For each live-edge graph $X$ in $\chi(S)$ let us denote by $\chi(T,X)$ the set of live-edge graphs in $\chi(T)$ that have $X$ as a subgraph and possibly contain other edges in $T\setminus S$. In other words, a live-edge graphs in $\chi(T,X)$ has been generated with the same outcomes as $X$ on the coin flips in the edges of $E\cup S$ and it has other outcomes for edges in $T\setminus S$. Note that $|\chi(T,X)| = 2^{|T\setminus S|}$. As in the proof for monotonicity, for each live-edge graph $X$ in $\chi(T)$, let $X^+$ and $X^-$ be the live-edge graphs in $\chi(T\cup\{e\})$ such that $E_{X^{+}} = E_X\cup\{e\}$ and $E_{X^{-}} = E_X$, respectively. Again, $\mathbb{P}[X^{+}]=\mathbb{P}[X]\cdot p_{e}$, $\mathbb{P}[X^{-}]=\mathbb{P}[X]\cdot (1-p_{e})$, $R(A, X)\subseteq R(A, X^{+})$, and $R(A, X) = R(A, X^{-})$.

Then, Formula~\eqref{eq:submodular:S} is equal to 
\begin{eqnarray*}
&&\sum_{X \in \chi(S)}\left(\mathbb{P}[X^+]\cdot |R(A, X^+)|+\mathbb{P}[X^-]\cdot |R(A, X^-)| -\mathbb{P}[X]\cdot |R(A, X)|\right)=\\
&&\sum_{X \in \chi(S)}\left(\mathbb{P}[X]\cdot p_e\cdot |R(A, X^+)|+\mathbb{P}[X]\cdot(1-p_e)\cdot|R(A, X^-)| -\mathbb{P}[X]\cdot |R(A, X)|\right)=\\
&&\sum_{X \in \chi(S)}\mathbb{P}[X]\cdot p_e \cdot(|R(A, X^+)| -|R(A, X)|).
\end{eqnarray*}
While, Formula~\eqref{eq:submodular:T} is equal to 
\begin{eqnarray*}
&&\sum_{X \in \chi(S)} \sum_{Y \in \chi(T,X)} \left(   \mathbb{P}[Y^+]\cdot |R(A, Y^+)|+\mathbb{P}[Y^-]\cdot |R(A, Y^-)| -\mathbb{P}[Y]\cdot |R(A, Y)| \right) = \\
&&\sum_{X \in \chi(S)} \sum_{Y \in \chi(T,X)} \left(\mathbb{P}[Y]\cdot p_e \cdot(|R(A, Y^+)| -|R(A, Y)|)\right).
\end{eqnarray*}
 
By Lemma~\ref{lem:submodular:R}, $|R(A, Y^+)| -|R(A, Y)| \leq |R(A, X^+)| -|R(A, X)|$. Moreover, $\sum_{Y \in \chi(T,X)} \mathbb{P}[Y] = \mathbb{P}[X]$ and then 
\[
\sum_{X \in \chi(S)} \sum_{Y \in \chi(T,X)} \mathbb{P}[Y]\cdot p_e \cdot(|R(A, Y^+)| -|R(A, Y)|) \leq \sum_{X \in \chi(S))}\mathbb{P}[X]\cdot p_e \cdot(|R(A, X^+)| -|R(A, X)|),
\]
which concludes the proof.
\end{proof}

\begin{algorithm2e}
\caption{\Greedy \MICM algorithm with approximate estimation of marginal increment.}
\label{alg:greedytwo}
\SetKwInput{Proc}{algorithm}
\SetKwInOut{Input}{Input}
\SetKwInOut{Output}{Output}
\Input{A directed graph $G=(V,E)$; a set of vertices $A\subseteq V$; and an integer $k\in\mathbb{N}$}
\Output{Set of edges $S\subseteq (A\times V)\setminus E$ such that $|S|\leq k$}
$S:=\emptyset$\;
\For{$i=1,2,\ldots,k$\label{alg:greedytwo:for}}
{
  \ForEach{$e \in (A\times V)\setminus (E\cup S)$\label{alg:greedytwo:foreachedge}}
  {
    Use repeated sampling to estimate a $(1+\lambda)$-approximation of $\sigma(A,S\cup \{e\})$ with prob. $1-\delta$\;
    Let $\tilde{\sigma}(A, S\cup \{e\})$ be the estimation\;
  }
  $\hat{e} = \arg\max\{ \tilde{\sigma}(A,  S\cup \{e\})~|~e=(a,v) \in (A\times V)\setminus (E\cup S)\}$\;
  $S := S \cup \{\hat{e}\}$\;
}
\Return $S$\;
\end{algorithm2e}

Theorem~\ref{th:submodular} can be generalized to the case in which each step of the greedy algorithm selects an element whose marginal increment is within a factor $(1 + \lambda)$ to the maximal one. In this case, the greedy algorithm guarantees a $\left(1-\frac{1}{e}-\epsilon\right)$-approximation, where $\epsilon$ depends on $\lambda$ and goes to $0$ as $\lambda\rightarrow 0$. By combining these results, we can formally define algorithm~\ref{alg:greedytwo} that differs from algorithm~\ref{alg:greedy} on how it computes $\sigma(A, S)$.
\begin{theorem}
algorithm~\ref{alg:greedytwo} guarantees an approximation factor of $\left(1-\frac{1}{e}-\epsilon\right)$ for the \MICM problem, where $\epsilon$ is any positive real number.
\end{theorem}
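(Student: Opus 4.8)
The plan is to read Algorithm~\ref{alg:greedytwo} as the greedy procedure of Theorem~\ref{th:submodular} applied to the ground set $N=(A\times V)\setminus E$ and the objective $\sigma(A,\cdot)$, which is non-negative, monotone and submodular by Theorem~\ref{th:submodular:MICM}, but where every evaluation of $\sigma$ is replaced by a $(1+\lambda)$-approximate estimate obtained through repeated sampling as in~\cite{KKT15} (see Section~\ref{sec:preliminaries}). The task is then to quantify how much the sampling error erodes the $1-\tfrac1e$ guarantee and to choose the sampling parameters so that this loss is at most $\epsilon$.

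First I would control the estimates globally. Algorithm~\ref{alg:greedytwo} performs at most $k\,|N|\le kn^2$ estimations; running each of them, via the bound of~\cite{KKT15}, with failure probability $\delta/(kn^2)$ and taking a union bound, with probability at least $1-\delta$ every estimate $\tilde\sigma(A,S\cup\{e\})$ ever produced satisfies $\tfrac{1}{1+\lambda}\sigma(A,S\cup\{e\})\le\tilde\sigma(A,S\cup\{e\})\le(1+\lambda)\sigma(A,S\cup\{e\})$; I condition on this event. Fix an iteration, let $S$ be the current partial solution, let $\hat e$ be the edge chosen by the algorithm, and let $e^*$ maximise the true marginal $\sigma(A,S\cup\{e\})-\sigma(A,S)$ over $e\in N\setminus S$ (equivalently, since $\sigma(A,S)$ is a constant, it maximises $\sigma(A,S\cup\{e\})$). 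Since $\hat e$ maximises the \emph{estimated} value, chaining the two-sided bounds gives $\sigma(A,S\cup\{\hat e\})\ge(1+\lambda)^{-2}\,\sigma(A,S\cup\{e^*\})$, and hence
\[
\sigma(A,S\cup\{\hat e\})-\sigma(A,S)\ \ge\ (1+\lambda)^{-2}\bigl(\sigma(A,S\cup\{e^*\})-\sigma(A,S)\bigr)-\eta,
\]
with $\eta=\bigl(1-(1+\lambda)^{-2}\bigr)\sigma(A,S)\le\bigl(1-(1+\lambda)^{-2}\bigr)n=O(\lambda n)$, using $\sigma(A,S)\le n$.

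Then I would feed this ``$(1+\lambda)^2$-approximate, $\eta$-additive'' greedy step into the standard Nemhauser recursion. Writing $S^*$ for an optimal solution, $a_i=\sigma(A,S^*)-\sigma(A,S_i)$ where $S_i$ is the partial solution after $i$ steps, and using $a_i\le k\cdot(\text{largest true marginal at step }i)$ by submodularity and $|S^*|\le k$, the displayed inequality yields $a_{i+1}\le\bigl(1-\tfrac{1}{k(1+\lambda)^2}\bigr)a_i+\eta$. Unrolling over the $k$ steps gives $\sigma(A,S_k)\ge\bigl(1-e^{-1/(1+\lambda)^2}\bigr)\sigma(A,S^*)-O(\lambda nk)$. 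Since $1-e^{-1/(1+\lambda)^2}=1-\tfrac1e-O(\lambda)$ and $\sigma(A,S^*)\ge|A|\ge1$ (each seed is active), choosing $\lambda=\Theta(\epsilon/(nk))$ makes both the multiplicative loss $O(\lambda)\sigma(A,S^*)$ and the additive loss $O(\lambda nk)$ at most $\tfrac\epsilon2\,\sigma(A,S^*)$, so $\sigma(A,S_k)\ge(1-\tfrac1e-\epsilon)\sigma(A,S^*)$; this value of $\lambda$ still keeps the required $\Omega(n^2\lambda^{-2}\ln(kn^2/\delta))$ simulations polynomial. Finally the failure probability $\delta$ can be made inverse-polynomial, or absorbed into the expected ratio by lowering it further.

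The main obstacle is precisely the passage from a multiplicative error on $\sigma$ to the error on a \emph{marginal}: the naive statement ``the chosen element's marginal is within a factor $(1+\lambda)$ of the best'' is false, because $\sigma(A,S)$ may be of order $n$ while the marginal is tiny, so the error surfaces as the additive term $\eta$. This forces the use of the additive-error-robust form of the greedy analysis together with the crude bound $\sigma(A,S)\le n$ and the normalisation $\sigma(A,S^*)\ge1$; once that is in place, the rest is the union bound and a bookkeeping choice of $\lambda$ and $\delta$.
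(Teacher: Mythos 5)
Your proposal is correct and follows the same overall strategy as the paper: run the Nemhauser--Wolsey--Fisher greedy analysis on the monotone submodular function $\sigma(A,\cdot)$ (Theorem~\ref{th:submodular:MICM}) while accounting for the error introduced by estimating $\sigma$ via sampling. The difference is that the paper gives essentially no proof --- it asserts that Theorem~\ref{th:submodular} ``can be generalized to the case in which each step of the greedy algorithm selects an element whose marginal increment is within a factor $(1+\lambda)$ to the maximal one'' and stops there, leaving unproved both that generalization and, more importantly, the passage from a $(1+\lambda)$-approximate evaluation of $\sigma(A,S\cup\{e\})$ to a $(1+\lambda)$-approximate \emph{marginal}. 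You correctly identify that this passage fails as stated, since $\sigma(A,S\cup\{e\})$ can be of order $n$ while the marginal increment is arbitrarily small, and you repair it by carrying an additive error $\eta=O(\lambda n)$ through the greedy recursion, unrolling to get a loss of $O(\lambda nk)$, and choosing $\lambda=\Theta(\epsilon/(nk))$ together with the normalisation $\sigma(A,S^*)\ge 1$ to convert the additive loss into a relative one while keeping the number of simulations polynomial. The union bound over the $O(kn^2)$ estimates is likewise a detail the paper omits. In short, your argument is a more careful and complete version of what the paper merely sketches, and it fills a genuine gap in the paper's stated justification.
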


\section{Approximation algorithms for the \CICM problem}\label{sec:approx_easy}
In this section we introduce our approximation algorithm for the \CICM problem.
First we propose a greedy algorithm that achieves an approximation factor of  $\frac{1}{2}\left(1-\frac{1}{e}\right)$ for the \CICM problem, then we improve such approximation factor to $\left(1-\frac{1}{e}\right)$ by using an enumeration technique.

\begin{algorithm2e}
\caption{\Greedy \CICM algorithm.}
\label{alg:greedy_r2}
\SetKwInput{Proc}{algorithm}
\SetKwInOut{Input}{Input}
\SetKwInOut{Output}{Output}
\Input{A directed graph $G=(V,E)$, an integer $k\in\mathbb{N}$ a seed set $A$}
\Output{A set of edges $S\subseteq (A\times V)\setminus E$ such that $c(S)\leq k$}
$S:=\emptyset$\;
$T:=(A\times V)\setminus E$\;
$e_{M} : = \arg\max_{e\in (A\times V)\setminus E} \left\{ \sigma(A,S\cup\{e\}) \right\}$\;\label{alg:greedy_r2:max}
\While{$T\neq\emptyset$}
{\label{alg:greedy_r2:whilestart}
      $\hat{e} : = \arg\max_{e\in  T} \left\{ \frac{\delta(S\cup\{e\},S)}{c_{e}} \right\}$\;
      \If{$k-c_{\hat{e}}\geq 0$}
      {
        $S := S \cup \{ \hat{e} \}$\;
        $k := k-c_{\hat{e}}$\;
      }
      $T := T \setminus \{ \hat{e} \}$\;
}\label{alg:greedy_r2:whileend}
\Return $\arg\max\{ \sigma(A,S), \sigma(A, \{e_{M}\}) \}$\;
\end{algorithm2e}
Our algorithm, whose pseudocode is reported in algorithm~\ref{alg:greedy_r2}, outputs a solution that maximizes the expected number of affected nodes between two possible solutions described in the following. The first solution is found at line~\ref{alg:greedy_r2:max} and is made of a single edge $(a_{M},v_{M})$ for which $\sigma(A,S\cup\{(a_M,v_M)\})$ is maximized; the second solution is obtained by a greedy algorithm at lines~\ref{alg:greedy_r2:whilestart}--\ref{alg:greedy_r2:whileend}.

In particular, the greedy phase, selects at each step an edge $\hat{e}$ to be added to the solution $S$ obtained at the previous iteration, such that the ratio between $\delta(S\cup \{ \hat{e}\},S)$ and $c_{\hat{e}}$ is maximized. 
Then, if cost $c(S\cup\{\hat{e}\})$ does not violate the budget, the edge $\hat{e}$ is added to $S$, otherwise the edge is discarded.

Next, we analyse the performance guaranteed by algorithm~\ref{alg:greedy_r2}.
We denote by $S^*$ an optimal solution to the problem.
 Let us consider the iterations $i$ executed by the greedy algorithm, for $i\geq 1$, let us denote by $j_i$ the index of such iterations, $j_i<j_{i+1}$. Let $j_l$ be the index of iterations until an edge is added to the solution without exceeding the given budged and let $j_{l+1}$ be the index of the first iteration in which an element in $S^*$ is considered (i.e. it maximizes the above ratio) but not added to $S$ because it violates the budget constraint.
We denote by $S_i$ the solution at the end of iteration $j_i$ and by $\bar{c}_i$ the marginal cost of $S_i$ as computed in the above ratio, $\bar{c}_i=c_{\hat{e}}$, where $\hat{e}$ is the edge selected at iteration $i$.

The next lemmas are the core of our analysis, note that the statements are similar to lemmas in~\cite{KMN99}.

\begin{lemma}\label{lem:marginalcosts_r2}
 After each iteration $j_i$, $i=1,2,\ldots,l+1$, $\sigma(A,S_i) - \sigma(A,S_{i-1}) \geq \frac{\bar{c}_i}{k} (\sigma(A,S^*) - \sigma(A,S_{i-1}))$.
\end{lemma}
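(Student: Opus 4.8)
The plan is to mimic the classical argument for the budgeted greedy (as in~\cite{KMN99}), relying on the monotonicity and submodularity of $\sigma(A,\cdot)$ established in Theorem~\ref{th:submodular:MICM} together with the fact that the greedy step maximizes the ratio $\delta(S\cup\{e\},S)/c_e$. First I would record the basic identity $\delta(S_i,S_{i-1}) = \sigma(A,S_i) - \sigma(A,S_{i-1})$, which follows directly from the definition of $\delta$ together with the live-edge-graph expression for $\sigma$: removing $T = S_i\setminus S_{i-1} = \{\hat e_i\}$ from a sampled graph $X\in\chi(S_i)$ yields exactly the term-by-term difference summing to $\sigma(A,S_i)-\sigma(A,S_{i-1})$. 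So it suffices to lower bound $\delta(S_i,S_{i-1})$.

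Next I would bound $\sigma(A,S^*) - \sigma(A,S_{i-1})$ from above. Write $S^* \setminus S_{i-1} = \{f_1,\dots,f_m\}$ and telescope:
\begin{equation*}
\sigma(A,S^*) - \sigma(A,S_{i-1}) \le \sigma(A,S_{i-1}\cup S^*) - \sigma(A,S_{i-1}) = \sum_{t=1}^{m}\Big(\sigma(A,S_{i-1}\cup\{f_1,\dots,f_t\}) - \sigma(A,S_{i-1}\cup\{f_1,\dots,f_{t-1}\})\Big),
\end{equation*}
where the first inequality is monotonicity. By submodularity each summand is at most $\sigma(A,S_{i-1}\cup\{f_t\}) - \sigma(A,S_{i-1}) = \delta(S_{i-1}\cup\{f_t\},S_{i-1})$. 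Now the greedy choice at iteration $j_i$ picks $\hat e_i$ maximizing $\delta(S_{i-1}\cup\{e\},S_{i-1})/c_e$ over all remaining candidates; crucially, for $i\le l+1$ every edge of $S^*$ is still available as a candidate at iteration $j_i$ (the first time an $S^*$-edge is rejected is, by definition, iteration $j_{l+1}$, and even there it is still \emph{considered}), so $\delta(S_{i-1}\cup\{f_t\},S_{i-1})/c_{f_t} \le \delta(S_i,S_{i-1})/\bar c_i$ for each $t$. Hence
\begin{equation*}
\sigma(A,S^*) - \sigma(A,S_{i-1}) \le \sum_{t=1}^m c_{f_t}\cdot\frac{\delta(S_i,S_{i-1})}{\bar c_i} = \frac{\delta(S_i,S_{i-1})}{\bar c_i}\,c(S^*\setminus S_{i-1}) \le \frac{\delta(S_i,S_{i-1})}{\bar c_i}\,k,
\end{equation*}
using $c(S^*\setminus S_{i-1}) \le c(S^*) \le k$. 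Rearranging and substituting $\delta(S_i,S_{i-1}) = \sigma(A,S_i)-\sigma(A,S_{i-1})$ gives exactly the claimed inequality.

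The main obstacle to get right is the bookkeeping about which iteration index $i$ the statement is allowed to range over and why the $S^*$-edges are valid comparison candidates there: the greedy ratio $\delta(S_{i-1}\cup\{e\},S_{i-1})/c_e$ being maximized by $\hat e_i$ only helps if each $f_t\in S^*\setminus S_{i-1}$ is in the candidate set $T$ at that iteration, and one must also handle the edge case where some $f_t$ has cost $c_{f_t}=0$ (the ratio is then defined by convention, or such edges contribute zero to the cost bound and can be argued away separately). The definition of $j_{l+1}$ as the first iteration where an $S^*$-edge is the ratio-maximizer but gets rejected is precisely what makes the range $i=1,\dots,l+1$ work: up to and including that iteration no $S^*$-edge has yet been removed from $T$. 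Everything else is the routine submodular telescoping above.
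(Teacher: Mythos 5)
Your proof is correct and follows essentially the same route as the paper's: the identity $\delta(S_i,S_{i-1})=\sigma(A,S_i)-\sigma(A,S_{i-1})$, the bound $\sigma(A,S^*)-\sigma(A,S_{i-1})\leq\sum_{e\in S^*\setminus S_{i-1}}\delta(S_{i-1}\cup\{e\},S_{i-1})$, the greedy ratio comparison, and $c(S^*\setminus S_{i-1})\leq k$. If anything you are more careful than the paper, which asserts the key inequality as ``easy to see'' where you supply the monotonicity-plus-submodular-telescoping justification, and which silently ignores the $c_e=0$ degeneracy and the availability of the $S^*$-edges in the candidate set that you explicitly address.
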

\begin{proof}
Firs we define $\delta_i$ to be the expected number of nodes affected by solution $S_i$ and not affected by solution $S_{i-1}$, $\delta_i=\delta(S_i,S_{i-1})$.

It is easy to see that the fallowing inequality holds
\begin{equation}\label{eq:one_r2}
 \sigma(A,S^*) - \sigma(A,S_{i-1}) \leq \sum_{e\in S^*\setminus S_{i-1}} \delta(S_{i-1}\cup\{e\},S_{i-1}),
\end{equation}
i.e. the value $\sigma(A,S^*) - \sigma(A,S_{i-1})$ is at most the sum, for each edge in $S^*$ and not in $S_{i-1}$, of the expected number of nodes affected by such edge and not affected by solution $(A,S_{i-1})$.

Since the greedy algorithm selects at each step the element that maximizes the ratio between $\delta_i$ and $\bar{c_i}$, for each  $e\in S^*\setminus S_{i-1}$ the following holds, 
\begin{align*}
 \frac{\delta(S_{i-1}\cup\{e\},S_{i-1})}{c_e} &\leq \frac{\delta_i}{\bar{c}_i}.
\end{align*}
Therefore,
\[
 \sum_{e\in S^*\setminus S_{i-1}} \delta(S_{i-1}\cup\{e\},S_{i-1})\leq \sum_{e\in S^*\setminus S_{i-1}} \frac{\delta_i}{\bar{c}_i}c_e = \frac{\delta_i}{\bar{c}_i} \left( \sum_{e\in S^*\setminus S_{i-1}} c_e\right) \leq k \frac{\delta_i}{\bar{c}_i}.
\]
To conclude the proof, we need to show that $\delta_i =  \sigma(A,S_i) - \sigma(A,S_{i-1})$. Indeed, if $S_i\setminus S_{i-1}=\{e\}$,
\begin{align*}
 \delta_i &= \sum_{X \in \chi(S_i)} \mathbb{P}[X]\cdot\left(|R(A, X)| - |R(A, X^e)|  \right)\\
          &= \sigma(A,S_i) - \sum_{X \in \chi(S_{i-1})} \left(p_e|R(A, X)| + (1-p_e)|R(A, X)|\right)\\
          &= \sigma(A,S_i) - \sigma(A,S_{i-1}).
\end{align*}

\end{proof}

Armed with Lemma~\ref{lem:marginalcosts_r2}, we prove the next lemma by induction on iterations $j_i$.
\begin{lemma}\label{lem:induction_r2}
 After each iteration $j_i$, $i=1,2,\ldots,l+1$, $$\sigma(A,S_i) \geq \left[ 1- \prod_{\ell=1}^i\left( 1 - \frac{\bar{c}_\ell}{k}\right) \right]\sigma(A,S^*).$$
\end{lemma}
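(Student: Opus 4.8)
The plan is to proceed by induction on $i$, using Lemma~\ref{lem:marginalcosts_r2} as the engine for the inductive step. For the base case $i=1$, Lemma~\ref{lem:marginalcosts_r2} applied with $i=1$ gives $\sigma(A,S_1) - \sigma(A,S_0) \geq \frac{\bar{c}_1}{k}(\sigma(A,S^*) - \sigma(A,S_0))$; since $S_0 = \emptyset$ and $\sigma(A,\emptyset) \geq 0$, rearranging yields $\sigma(A,S_1) \geq \frac{\bar{c}_1}{k}\sigma(A,S^*) = \left[1 - \left(1 - \frac{\bar{c}_1}{k}\right)\right]\sigma(A,S^*)$, which is exactly the claimed bound for $i=1$.

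For the inductive step, assume the statement holds for $i-1$, i.e.\ $\sigma(A,S_{i-1}) \geq \left[1 - \prod_{\ell=1}^{i-1}\left(1 - \frac{\bar{c}_\ell}{k}\right)\right]\sigma(A,S^*)$. Starting from Lemma~\ref{lem:marginalcosts_r2}, I would rewrite $\sigma(A,S_i) \geq \sigma(A,S_{i-1}) + \frac{\bar{c}_i}{k}(\sigma(A,S^*) - \sigma(A,S_{i-1})) = \left(1 - \frac{\bar{c}_i}{k}\right)\sigma(A,S_{i-1}) + \frac{\bar{c}_i}{k}\sigma(A,S^*)$. Now substitute the inductive hypothesis into the first term; since $1 - \frac{\bar{c}_i}{k} \geq 0$ (as each edge cost is at most $1 \leq k$, assuming $k \geq 1$), the inequality is preserved, giving
\begin{align*}
\sigma(A,S_i) &\geq \left(1 - \frac{\bar{c}_i}{k}\right)\left[1 - \prod_{\ell=1}^{i-1}\left(1 - \frac{\bar{c}_\ell}{k}\right)\right]\sigma(A,S^*) + \frac{\bar{c}_i}{k}\sigma(A,S^*)\\
&= \left[1 - \prod_{\ell=1}^{i}\left(1 - \frac{\bar{c}_\ell}{k}\right)\right]\sigma(A,S^*),
\end{align*}
where the last equality follows by expanding and collecting terms: $\left(1 - \frac{\bar{c}_i}{k}\right) - \left(1 - \frac{\bar{c}_i}{k}\right)\prod_{\ell=1}^{i-1}\left(1-\frac{\bar{c}_\ell}{k}\right) + \frac{\bar{c}_i}{k} = 1 - \prod_{\ell=1}^{i}\left(1-\frac{\bar{c}_\ell}{k}\right)$.

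The computations here are entirely routine; the only subtlety I anticipate is making sure the sign condition $1 - \frac{\bar{c}_i}{k} \geq 0$ is justified so that substituting a lower bound into that coefficient keeps the inequality in the right direction — this needs $\bar{c}_i \leq k$, which holds since every edge cost lies in $[0,1]$ and $k \geq 1$ (if $k < 1$ the budget admits no edges and the statement is vacuous or trivial). The genuine content of the lemma is already packaged inside Lemma~\ref{lem:marginalcosts_r2}, so this proof is essentially the standard telescoping-product induction used in the budgeted submodular maximization analysis of~\cite{KMN99}.
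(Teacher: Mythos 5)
Your proof is correct and follows essentially the same telescoping induction as the paper: the base case and inductive step both invoke Lemma~\ref{lem:marginalcosts_r2}, rearrange to $\left(1-\frac{\bar{c}_i}{k}\right)\sigma(A,S_{i-1}) + \frac{\bar{c}_i}{k}\sigma(A,S^*)$, and substitute the inductive hypothesis. Your explicit check that $1-\frac{\bar{c}_i}{k}\geq 0$ (needed to preserve the inequality direction when substituting the lower bound) is a small point the paper leaves implicit, but otherwise the two arguments coincide.
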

\begin{proof}
 For $i=1$, by Lemma~\ref{lem:marginalcosts_r2}, $\sigma(A,S_1)  \geq \frac{\bar{c}_1}{k} \sigma(A,S^*) =\left[ 1- \left( 1 - \frac{\bar{c}_1}{k}\right) \right]\sigma(A,S^*)$. Let us assume that the statement holds for $j_1,j_2,\ldots,j_{i-1}$, then 
 \begin{align*}
  \sigma(A,S_i) &= \sigma(A,S_{i-1}) + \left[ \sigma(A,S_i) - \sigma(A,S_{i-1})\right]\\
                  &\geq \sigma(A,S_{i-1}) + \frac{\bar{c}_i}{k}\left[ \sigma(A,S^*) - \sigma(A,S_{i-1})\right]\\
                  &= \sigma(A,S_{i-1}) \left(1-\frac{\bar{c}_i}{k}\right) + \frac{\bar{c}_i}{k}\sigma(A,S^*)
 \end{align*}
 where the inequalities follows from Lemma~\ref{lem:marginalcosts_r2}.

 To conclude the proof we apply the inductive hypothesis: 
\begin{align*}
             \sigma(A,S_{i-1}) \left(1-\frac{\bar{c}_i}{k}\right) + \frac{\bar{c}_i}{k}\sigma(A,S^*)&\geq \left[ 1- \prod_{\ell=1}^{i-1}\left( 1 - \frac{\bar{c}_\ell}{k}\right) \right]\left(1-\frac{\bar{c}_i}{k}\right)\sigma(A,S^*)+ \frac{\bar{c}_i}{k}\sigma(A,S^*)\\
                  &=\left[ 1- \prod_{\ell=1}^i\left( 1 - \frac{\bar{c}_\ell}{k}\right) \right]\sigma(A,S^*).
 \end{align*}
\end{proof}

\begin{theorem}
Algorithm~\ref{alg:greedy_r2} achieves an approximation factor of  $\frac{1}{2}\left(1-\frac{1}{e}\right)$ for the \CICM problem.
\end{theorem}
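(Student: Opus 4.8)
The plan is to follow the classical budgeted-greedy analysis of Khuller, Moss, and Naor, adapted to the submodular influence objective $\sigma(A,\cdot)$. First I would invoke Lemma~\ref{lem:induction_r2} at the last successful iteration $j_l$ together with the ``virtual'' iteration $j_{l+1}$ in which an edge of $S^*$ is examined but rejected for violating the budget. By Lemma~\ref{lem:induction_r2} applied at index $l+1$ we have $\sigma(A,S_{l+1}) \geq \left[1 - \prod_{\ell=1}^{l+1}\left(1-\frac{\bar c_\ell}{k}\right)\right]\sigma(A,S^*)$. Since $\sum_{\ell=1}^{l+1}\bar c_\ell > k$ (the budget is exceeded exactly at step $l+1$), the standard inequality $\prod_{\ell=1}^{t}(1-x_\ell) \leq \left(1-\frac{\sum_\ell x_\ell}{t}\right)^t \leq e^{-\sum_\ell x_\ell}$ with $\sum_\ell \bar c_\ell/k > 1$ yields $\sigma(A,S_{l+1}) \geq \left(1-\frac{1}{e}\right)\sigma(A,S^*)$.

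Next I would split $\sigma(A,S_{l+1})$ into the contribution of the feasibly-added edges and the contribution of the single rejected edge. Writing $S_{l+1} = S_l \cup \{\hat e\}$ where $\hat e$ is the rejected edge, monotonicity and submodularity of $\sigma$ (Theorem~\ref{th:submodular:MICM}) give $\sigma(A,S_{l+1}) \leq \sigma(A,S_l) + \delta(S_l\cup\{\hat e\},S_l) \leq \sigma(A,S_l) + \sigma(A,\{\hat e\})$, where the last step uses $\delta(S_l\cup\{\hat e\},S_l) = \sigma(A,S_l\cup\{\hat e\}) - \sigma(A,S_l) \leq \sigma(A,\{\hat e\}) - \sigma(A,\emptyset) = \sigma(A,\{\hat e\}) - \sigma(A,\emptyset)$ by submodularity applied to $\emptyset \subseteq S_l$. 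Since $e_M$ is chosen in line~\ref{alg:greedy_r2:max} to maximize $\sigma(A,\{e\})$ over all single edges, $\sigma(A,\{\hat e\}) \leq \sigma(A,\{e_M\})$, so $\sigma(A,S_{l+1}) \leq \sigma(A,S_l) + \sigma(A,\{e_M\})$.

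Combining the two displays, $\left(1-\frac{1}{e}\right)\sigma(A,S^*) \leq \sigma(A,S_l) + \sigma(A,\{e_M\}) \leq 2\max\{\sigma(A,S_l), \sigma(A,\{e_M\})\}$. Since $S_l$ is exactly the greedy solution $S$ returned in the comparison at the last line of algorithm~\ref{alg:greedy_r2} and the algorithm outputs $\arg\max\{\sigma(A,S),\sigma(A,\{e_M\})\}$, the returned solution has value at least $\frac{1}{2}\left(1-\frac{1}{e}\right)\sigma(A,S^*)$, which is the claimed bound.

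The main obstacle is the bookkeeping around the rejected-edge iteration $j_{l+1}$: one must be careful that Lemma~\ref{lem:induction_r2} is genuinely valid at index $l+1$ even though that edge is never added, which relies on the lemma's statement being phrased in terms of the marginal-gain/marginal-cost ratio rather than actual insertion, and on $\delta(S_l\cup\{\hat e\},S_l)$ still obeying the greedy selection inequality from Lemma~\ref{lem:marginalcosts_r2}. I would also need to double-check the edge case $l+1$ not existing (the greedy exhausts $T$ without ever violating the budget), in which case $\sigma(A,S_l) \geq (1-1/e)\sigma(A,S^*)$ directly since the greedy has then effectively solved the cardinality-unconstrained problem over all candidate edges — here one argues $S^* \subseteq$ the set of all added edges is impossible to beat, or more carefully that $\prod_\ell(1-\bar c_\ell/k)$ with $\sum \bar c_\ell \le k$ still combined with the fact that every remaining edge had nonpositive marginal gain gives the bound; this case only strengthens the result.
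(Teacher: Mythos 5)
Your proposal is correct and follows essentially the same route as the paper's proof: apply Lemma~\ref{lem:induction_r2} at the virtual iteration $j_{l+1}$ to get $\sigma(A,S_{l+1})\geq\left(1-\frac{1}{e}\right)\sigma(A,S^*)$, bound the rejected edge's marginal gain $\delta_{l+1}$ by $\sigma(A,\{e_M\})$, and conclude via $\max\{\sigma(A,S_l),\sigma(A,\{e_M\})\}\geq\frac{1}{2}\left(\sigma(A,S_l)+\sigma(A,\{e_M\})\right)$. Your explicit submodularity justification of $\delta_{l+1}\leq\sigma(A,\{e_M\})$ and your treatment of the edge case where no budget violation ever occurs are slightly more careful than the paper's, but they do not change the argument.
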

\begin{proof} 
We first observe two facts:
\begin{enumerate}
\item since $(A,S_{l+1})$ violates the budget, then $c(A,S_{l+1})>k$,
\item for a sequence of numbers $a_1,a_2,\ldots,a_n$ such that $\sum_{\ell=1}^n a_\ell = A$, the fallowing holds: $\prod_{i=1}^n \left(1 -\frac{a_i}{A}\right)\leq \left(1-\frac{1}{n}\right)^n$.
\end{enumerate}
Therefore, by applying Lemma~\ref{lem:induction_r2} for $i=l+1$ we obtain:
\begin{align*}
 \sigma(A,S_{l+1}) & \geq \left[ 1- \prod_{\ell=1}^{l+1}\left( 1 - \frac{\bar{c}_\ell}{k}\right) \right]\sigma(A,S^*)\\
                         & \geq \left[ 1- \prod_{\ell=1}^{l+1}\left( 1 - \frac{\bar{c}_\ell}{c(S_{l+1})}\right) \right]\sigma(A,S^*)\\
                         & \geq \left[ 1- \left( 1 - \frac{1}{l+1}\right)^{l+1} \right]\sigma(A,S^*)\\
                         & \geq \left(1-\frac{1}{e}\right) \sigma(A,S^*).
\end{align*}
It follows that:
 \begin{align}
 \sigma(A,S_{l+1}) = \sigma(A,S_{l}) + \delta_{l+1}\geq \left(1-\frac{1}{e}\right) \sigma(A,S^*).\label{eq:simple}
\end{align}
  Moreover, since $\delta_{l+1} \leq \sigma(A, \{e_{M}\})$, we get:
\[
 \sigma(A,S_{l}) + \sigma(A, \{e_{M}\})\geq \left(1-\frac{1}{e}\right) \sigma(A,S^*).
\]
Finally, note that $\max\{ \sigma(A,S_l), \sigma(A, \{e_{M}\}) \}\geq \frac{1}{2}\left(1-\frac{1}{e}\right) \sigma(A,S^*)$.
\end{proof}

We now  propose  an algorithm which improves the performance guarantee of algorithm~\ref{alg:greedy_r2}. 
Let $M$ a fixed integer, we consider all the solutions of cardinality $M$ (i.e. $|S| = M$) which have cost at most $k$, $c(S)\leq k$, and we complete each solution by using the greedy algorithm. The pseudocode is reported in algorithm~\ref{alg:greedy_r2_2}.

\begin{algorithm2e}
\caption{\Greedy \MICM algorithm with enumeration technique.}
\label{alg:greedy_r2_2}
\SetKwInput{Proc}{algorithm}
\SetKwInOut{Input}{Input}
\SetKwInOut{Output}{Output}
\Input{A directed graph $G=(V,E)$, integer $M\in\mathbb{N}$ and an integer $k\in\mathbb{N}$}
\Output{A set of edges $S\subseteq (A\times V)\setminus E$ such that $c(S)\leq k$}
$S_1 : = \arg\max\{ \sigma(A,S): |S|< M, c(S)\leq k \}$\;\label{alg:greedy_r2_2:first}
$S_2:=\emptyset$\;
$T:=(A\times V)\setminus E$\;
\ForEach{$ S\subseteq T \text{ such that } |S|= M, c(S)\leq k $}
{ 
$T := T \setminus S$\;
Complete $S$ by using algorithm~\ref{alg:greedy_r2} with $T$ as possible edge set\;
%
\If{$\sigma(A, S)>\sigma(A, S_2)$}
{
	$S_2 := S$\;	
}
}
\Return $\arg\max\{ \sigma(A,S_1), \sigma(A, S_2)\}$\;
\end{algorithm2e}

\begin{theorem}\label{th:final}
 For $M\geq 3$ algorithm~\ref{alg:greedy_r2_2} achieves an approximation factor of  $\left(1-\frac{1}{e}\right)$ for the \CICM problem.
\end{theorem}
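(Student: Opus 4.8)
The plan is to use a partial-enumeration argument in the spirit of the budgeted-coverage analysis of~\cite{KMN99}: among the $M$-subsets considered by the enumeration loop of algorithm~\ref{alg:greedy_r2_2} I would single out one set $Y$ whose greedy completion \emph{alone} already attains $\left(1-\frac1e\right)\sigma(A,S^*)$, so that the single-edge fallback of algorithm~\ref{alg:greedy_r2} and the constant $\frac12$ are no longer needed. Let $S^*$ be an optimal solution. If $|S^*|<M$, then $S^*$ itself is a candidate at line~\ref{alg:greedy_r2_2:first} and $\sigma(A,S_1)\ge\sigma(A,S^*)$, so assume $|S^*|\ge M$. Order the edges of $S^*$ greedily by marginal value, $v_1\in\arg\max_{e\in S^*}\sigma(A,\{e\})$ and $v_t\in\arg\max_{e\in S^*\setminus\{v_1,\dots,v_{t-1}\}}\sigma(A,\{v_1,\dots,v_{t-1},e\})$, and set $Y=\{v_1,\dots,v_M\}$. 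Since $Y\subseteq S^*$ we have $|Y|=M$ and $c(Y)\le c(S^*)\le k$, so $Y$ is one of the enumerated sets, and the solution obtained by completing $Y$ contains every intermediate greedy set, so by monotonicity (Theorem~\ref{th:submodular:MICM}) it suffices to lower bound one well-chosen intermediate set.

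Next I would replay the induction behind Lemmas~\ref{lem:marginalcosts_r2} and~\ref{lem:induction_r2} along the greedy completion $Y=A_0\subseteq A_1\subseteq\cdots$, $A_i=A_{i-1}\cup\{g_i\}$, but against a \emph{residual} budget. Let $e'$ be the first edge of $S^*\setminus Y$ that the greedy step considers and rejects for budget reasons; if there is none, the completion adds all of $S^*\setminus Y$, so $S^*$ is contained in the final set and monotonicity ends the proof. Let $A_r$ be the current set when $e'$ is rejected, and put $g_{r+1}:=e'$, $A_{r+1}:=A_r\cup\{e'\}$. Since $e'$ is the \emph{first} such edge, no edge of $S^*$ has been discarded before it, so every edge of $S^*\setminus A_{i-1}$ is still available when $g_i$ is selected ($i\le r+1$) and $g_i$ maximises the ratio $\delta(\cdot\cup\{e\},\cdot)/c_e$ over $S^*\setminus A_{i-1}$; moreover $A_{i-1}\supseteq Y$ gives $\sum_{e\in S^*\setminus A_{i-1}}c_e\le c(S^*)-c(Y)\le k-c(Y)=:k'$ (one checks $k'>0$ whenever $e'$ exists). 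Running the proof of Lemma~\ref{lem:marginalcosts_r2} (Inequality~\eqref{eq:one_r2} together with ratio-maximality) with $k'$ in place of $k$, and iterating as in Lemma~\ref{lem:induction_r2}, yields for all $i\le r+1$
\[
\sigma(A,A_i)\ \ge\ \Bigl(1-\prod_{j=1}^{i}\bigl(1-\tfrac{c_{g_j}}{k'}\bigr)\Bigr)\sigma(A,S^*)+\Bigl(\prod_{j=1}^{i}\bigl(1-\tfrac{c_{g_j}}{k'}\bigr)\Bigr)\sigma(A,Y).
\]

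To finish I would combine two facts about the index $r+1$. First, $e'$ overflows $A_r$, so $c(Y)+\sum_{j=1}^{r}c_{g_j}+c_{e'}>k$, i.e. $\sum_{j=1}^{r+1}c_{g_j}>k'$, hence $P:=\prod_{j=1}^{r+1}(1-c_{g_j}/k')\le\exp\!\bigl(-\sum_{j=1}^{r+1}c_{g_j}/k'\bigr)<e^{-1}$. Second, since $Y$ is the greedily-most-valuable $M$-subset of $S^*$ and $e'\in S^*\setminus Y$, submodularity (Theorem~\ref{th:submodular:MICM}) gives $\delta(A_r\cup\{e'\},A_r)\le\delta(\{v_1,\dots,v_t\},\{v_1,\dots,v_{t-1}\})$ for every $t\le M$ (because $v_t$ was preferred to $e'$ and $\{v_1,\dots,v_{t-1}\}\subseteq A_r$), and summing over $t$ telescopes to $\delta(A_r\cup\{e'\},A_r)\le\frac1M\bigl(\sigma(A,Y)-\sigma(A,\emptyset)\bigr)\le\frac1M\sigma(A,Y)$. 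Since $\sigma(A,A_r)=\sigma(A,A_{r+1})-\delta(A_r\cup\{e'\},A_r)$, the displayed bound at $i=r+1$ gives $\sigma(A,A_r)\ge(1-P)\sigma(A,S^*)+\bigl(P-\tfrac1M\bigr)\sigma(A,Y)$. If $P\ge\frac1M$ this is at least $(1-P)\sigma(A,S^*)>\left(1-\frac1e\right)\sigma(A,S^*)$; if $P<\frac1M$, using $\sigma(A,Y)\le\sigma(A,S^*)$ it is at least $\left(1-\frac1M\right)\sigma(A,S^*)\ge\left(1-\frac1e\right)\sigma(A,S^*)$ because $M\ge3>e$. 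Either way the completion of $Y$ has value at least $\left(1-\frac1e\right)\sigma(A,S^*)$, hence $\sigma(A,S_2)\ge\left(1-\frac1e\right)\sigma(A,S^*)$ and the returned solution is within $1-\frac1e$ of the optimum.

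The main obstacle is the term $c(Y)$: running the induction naively with denominator $k$ yields only $P<e^{-1}e^{c(Y)/k}$, which is vacuous when $c(Y)$ is a constant fraction of $k$. The fix requires two ingredients used together: that the residual optimum $S^*\setminus A_i$ lies inside $S^*\setminus Y$ and therefore costs at most $k-c(Y)$, so the telescoping product is naturally taken against $k'$; and that the overflow edge — defined as the first edge of $S^*\setminus Y$ that greedy rejects, which is exactly what guarantees it belongs to $S^*$ and that all of $S^*\setminus A_r$ was still available at the earlier greedy steps — has marginal value at most $\sigma(A,Y)/M$, which offsets the $\sigma(A,Y)$ term in the two-case estimate. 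One must also verify that the completion step of algorithm~\ref{alg:greedy_r2}, which \emph{discards} a rejected edge and continues rather than stopping, does not break these invariants; this is immediate once $e'$ is chosen as the first \emph{optimal} edge to be discarded.
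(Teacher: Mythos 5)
Your proof is correct and follows essentially the same route as the paper's: enumerate the $M$-element greedy-ordered prefix $Y=S_Z$ of $S^*$, complete it greedily, apply the telescoping bound of Lemmas~\ref{lem:marginalcosts_r2}--\ref{lem:induction_r2} to the completion, and absorb the overflow edge via $\delta_{l+1}\le\sigma(A,S_Z)/M$ together with $M\ge 3$. Your only real departure is bookkeeping: you run the induction against $\sigma(A,S^*)$ with base value $\sigma(A,Y)$ and residual budget $k-c(Y)$, finishing with a two-case analysis on the product $P$, whereas the paper invokes Inequality~\eqref{eq:simple} against $\sigma(A,S^*\setminus S_Z)$ and recombines by subadditivity --- your version is in fact more explicit about why the $c(Y)$ term in the denominator of the telescoping product is harmless, a point the paper glosses over.
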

\begin{proof}
We assume that $|S^*|>k$ since otherwise algorithm~\ref{alg:greedy_r2_2} finds an optimal solution.

We sort the edges in $S^*$ in decreasing order according to their marginal increment in the objective function.

Let $S_Z$ be the first $M$ elements in this order. We now consider the iteration of algorithm~\ref{alg:greedy_r2_2} in which element $Z$ is considered. We define $S_{Z'}$ such that $S=S_Z \cup S_{Z'}$, where $S$ is the solution obtained after applying the greedy algorithm. It follows that:
\[
\sigma(A, S)=\sigma(A, S_Z)+\delta(S_Z \cup S_{Z'}, S_Z). 
\]

The completion of $S_Z$ to $S$ is an application of the greedy heuristic from algorithm~\ref{alg:greedy_r2} therefore, we can use the result from the previous theorems.  
Let us consider the iterations executed by the greedy algorithm during the completion of $S_Z$ to $S$. For $i\geq 1$, let us denote by $j_i$ the index of such iterations, $j_i<j_{i+1}$, and let $j_{l+1}$ be the index of the first iteration in which an edge in $S^* \setminus S_Z$ is considered but not added to $S_{Z'}$ because it violates the budget constraint. Applying Inequality~\eqref{eq:simple}, we get:
\begin{align}
\delta(S_Z \cup S_{Z'}, S_Z) + \delta_{l+1}\geq \left(1-\frac{1}{e}\right) \sigma(A, S^* \setminus S_Z)\label{eq:approx_r2}
\end{align}
we observe that, since we ordered the elements in $S^*$, $\delta_{l+1}\leq \frac{\sigma(A, S_Z)}{M}$. 

Therefore, applying Inequality~\eqref{eq:approx_r2} and the previous observation:
\begin{align*}
\sigma(A, S)&=\sigma(A, S_Z)+\delta(S_Z \cup S_{Z'},S_Z)\\
 & \geq \sigma(A, S_Z)+ \left(1-\frac{1}{e}\right) \sigma(A,S^* \setminus S_Z) -\delta_{l+1}\\
& \geq \sigma(A, S_Z)+ \left(1-\frac{1}{e}\right) \sigma(A,S^* \setminus S_Z) -\frac{\sigma(A, S_Z)}{M}\\
&\geq \left(1-\frac{1}{M}\right) \sigma(A, S_Z)+ \left(1-\frac{1}{e}\right) \sigma(A,S^* \setminus S_Z)
\end{align*}
But, $\sigma(A, S_Z)+ \sigma(A,S^* \setminus S_Z)\geq\sigma(A, S^*)$, and we get:
\begin{align*}
\sigma(A, S)&\geq  \left(1-\frac{1}{e}\right) \sigma(A,S^*)+ \left(\frac{1}{e}-\frac{1}{M}\right) \sigma(A,S_Z) \\
&\geq  \left(1-\frac{1}{e}\right) \sigma(A,S^*),\quad \text{ for }M\geq 3
\end{align*}
proving the theorem.
\end{proof}

As in the previous section, Theorem~\ref{th:final} can be generalized to the case in which each step of the greedy algorithm uses repeated sampling to estimate a $(1 + \lambda)$-approximation of $\sigma(A, S)$ with probability $(1-\delta)$. In this case, the greedy algorithm guarantees a $\left(1-\frac{1}{e}-\epsilon\right)$-approximation, where $\epsilon\rightarrow 0$ as $\lambda\rightarrow 0$. 

\section{Conclusion and future research}\label{sec:concl}
In this paper, we have shown that \MICM admits a constant factor approximation algorithm by proving that the expected number of activated nodes is monotonically increasing and submodular with respect to the possible set of edges incident to the seeds. We further provide an upper bound to the approximation factor that is slightly higher than that guaranteed by our algorithm. 
Moreover, we have shown how to approximate the more general \CICM problem: we first provide an algorithm  which achieves a $\frac{1}{2}\left(1-\frac{1}{e}\right)$ approximation factor and then, using the enumeration technique, we improve the performance guarantee within a factor of $1-\frac{1}{e}$.

As future works, we plan to analyze a minimization version of the \CICM problem where we allow the deletion of edges incident to seeds. Moreover, our intent is to study the same problem in a generalization of ICM, which is the Decreasing Cascade model. In this model the probability of a node $u$ to influence $v$ is non-increasing as a function of the set of nodes that have previously tried to influence $v$. Other research directions that deserve further investigation include the study of the \CICM problem on different information diffusion models such as LTM or the Triggering Model~\cite{KKT15}.
We are also interested, as a future work, in studying a generalization of the problem of Kempe et al. in which we are allowed to spend part of the budget to select seeds, which are not given, and part of it to create new edges incident to such seeds. 
Finally, we plan to assess the performance of our greedy algorithm from the experimental point of view and to propose some heuristics with the aim of improving the efficiency of the algorithm.

\bibliographystyle{plainurl}
\bibliography{references_journal}

\end{document}